\renewcommand{\eqref}[1]{\hyperref[#1]{(\ref*{#1})}}
\newcommand{\yfnote}[1]{\todo[color=red!20!blue!15, size=\footnotesize]{yf: #1}}
\newtheorem{theorem}{Theorem}[section]
\newtheorem{lemma}[theorem]{Lemma}
\newtheorem{corollary}[theorem]{Corollary}
\newtheorem{claim}[theorem]{Claim}
\newtheorem{definition}[theorem]{Definition}
\newtheorem{remark}[theorem]{Remark}
\providecommand{\hit}[1]{{\rm Hit}_{#1}}
\renewcommand{\epsilon}{\varepsilon}
\renewcommand{\phi}{\varphi}
\newcommand{\eps}{\epsilon}
\newcommand{\calF}{{\mathcal{F}}}
\newcommand{\F}{{\mathcal{F}}}
\newcommand{\fdom}[1]{\colon #1 \to \Sigma}
\providecommand{\paramq}{\alpha}
\providecommand{\bits}{\{0,1\}}
\DeclareMathOperator{\bdeg}{bdeg}
\DeclareMathOperator{\rej}{rej}
\providecommand{\rTpd}{\rej_{T_{p,d}}}
\providecommand{\rTd}{\rej_{T_d}}
\newcommand{\restr}{\!\!\upharpoonright}
\newcommand{\bi}[2]{\binom{#1}{#2}}
\newcommand{\bind}[1]{\bi{#1}{d}}
\newcommand{\binld}[1]{\bi{#1}{\leq d}}
\newcommand{\bnd}{\bind{[n]}}
\DeclareMathOperator*{\EE}{\mathbb{E}}
\DeclareMathOperator{\agree}{agree}
\newcommand{\lagree}[1]{\stackrel{#1}{\approx}}
\newcommand{\Prob}[2][]{\Pr_{{#1}}\left[#2\right]} % use by \Prob[x]{event}
\newcommand{\set}[1]{\{#1\}}
\newcommand{\etal}{{\em et al.}}
\title{Agreement tests on graphs and hypergraphs \footnote{A preliminary version of this paper and the parallel work~\cite{DinurFH-ks} appeared in  \emph{Proc.\ $30$th Annual {ACM}-{SIAM} Symp.\ on Discrete
  Algorithms (SODA)}~\cite{DinurFH2019}.}}
\author{Irit Dinur\thanks{Weizmann Institute of Science, ISRAEL. email: {\tt irit.dinur@weizmann.ac.il}. Research supported by ERC-CoG grant 772839.}
\and Yuval Filmus\thanks{Technion — Israel Institute of Technology, ISRAEL. email: {\tt yuvalfi@cs.technion.ac.il}. Taub Fellow --- supported by the Taub Foundations. The research was funded by ISF grant 1337/16.}
 \and Prahladh Harsha\thanks{Tata Institute of Fundamental Research, INDIA. email: {\tt prahladh@tifr.res.in}. Research supported by the Department of Atomic Energy, Government of India, under project no. 12-R\&D-TFR-5.01-0500 and in part by the UGC-ISF grant and Swarnajayanti fellowship. Part of this work was done when the author was visiting the Weizmann Institute of Science.}}
\begin{document}

\begin{titlepage}

\pagenumbering{gobble}

\maketitle
\begin{abstract}
  Agreement tests are a generalization of low degree tests that capture a local-to-global phenomenon, which forms the combinatorial backbone of most PCP constructions. In an agreement test, a function is given by an ensemble of local restrictions. The agreement test checks that the restrictions agree when they overlap, and the main question is whether average agreement of the local pieces implies that there exists a global function that agrees with most local restrictions.

There are very few structures that support agreement tests, essentially either coming from algebraic low degree tests or from direct product tests (and recently also from high-dimensional expanders). In this work, we prove a new agreement theorem which extends direct product tests to higher dimensions, analogous to how low degree tests extend linearity testing. As a corollary of our main theorem, it follows that an ensemble of small graphs on overlapping sets of vertices can be glued together to one global graph assuming they agree with each other on average.

We prove the agreement theorem by (re)proving the agreement theorem for dimension $1$, and then generalizing it to higher dimensions (with the dimension $1$ case being the direct product test, and dimension $2$ being the graph case). A key technical step in our proof is the reverse union bound, which allows us to treat dependent events as if they are disjoint, and may be of independent interest. An added benefit of the reverse union bound is that it can be used to show that the ``majority decoded'' function also serves as a global function that explains the local consistency of the agreement theorem, a fact that was not known even in the direct product setting (dimension $1$) prior to our work.

Beyond the motivation to understand fundamental local-to-global structures, our main theorem allows us to lift structure theorems from $\mu_{1/2}$ to $\mu_{p}$.
As a simple demonstration of this paradigm, we show how the low degree testing result of of Alon~\etal~[\emph{IEEE Trans. Inform. Theory}, 2005] and Bhattacharyya~\etal~[\emph{Proc. 51st FOCS}, 2010], originally proved for $\mu_{1/2}$, can be extended to the biased hypercube $\mu_{p}$, even for very small sub-constant $p$.
\end{abstract}

\end{titlepage}

\pagenumbering{arabic}

\section{Introduction}

Agreement tests are a type of PCP tests and capture a fundamental local-to-global phenomenon.
In this paper, we study an agreement testing question that is a new extension of direct product testing to higher dimensions.

It is a basic fact of computation that any global computation can be broken down into a sequence of local steps. The PCP theorem  \cite{AroraS1998,AroraLMSS1998} says that moreover, this can be done in a robust fashion, so that as long as \emph{most} steps are correct, the entire computation checks out. At the heart of this is a \emph{local-to-global} argument that allows deducing a global property from local pieces that fit together only approximately.

A key example is the line vs.\ line \cite{GemmellLRSW1991,RubinfeldS1996} low degree test in the proof of the PCP theorem. In the PCP construction, a function on a large vector space is replaced by an ensemble of (supposed) restrictions to all possible affine lines. These restrictions are supplied by a prover and are not a~priori guaranteed to agree with any single global function. This is taken care of by the  ``low degree test'', which checks that restrictions on intersecting lines agree with each other, i.e.\ they give the same value to the point of intersection. The crux of the argument is the fact that the local agreement checks \emph{imply} agreement with a single global function. Thus, the low degree test captures a local-to-global phenomenon.

In what other scenarios does such a local-to-global theorem hold? This question was first asked by Goldreich and Safra \cite{GoldreichS2000}, who studied a combinatorial analog of the low degree test. Let us describe the basic framework of agreement testing  in which we will study this question.
In agreement testing, a global function is given by an ensemble of local functions.
There are two key aspects of agreement testing scenarios:
\begin{itemize}
\item Combinatorial structure: for a given ground set $V$ of size $n$, the combinatorial structure is a collection $H$ of subsets $S\subset V$ such that for each $S\in H$ we get a local function. For example, if $V$ is the points of a vector space then $H$ can be the collection of affine lines.
\item Allowed functions: for each subset $S\in H$, we can specify a space $\F_S$ of functions on $S$ that are allowed. The input to the agreement test is an ensemble of functions $\{f_S\}$ such that for every $S\in H$, $f_S\in \F_S$. For example, in the line vs.\ line low degree test we only allow local functions on each line that have low degree.
\end{itemize}

Given the ensemble $\set{f_S}$, the intention is that $f_S$ is the restriction to $S$ of a global function $F\colon V\to\Sigma$. Indeed, a local ensemble is called \emph{global} if there is a global function $F\colon V\to\Sigma$ such that
\[ \forall S\in H, \qquad f_S = F|_S.\]

An \emph{agreement check} for a pair of subsets $S_1,S_2$ is specified by a triple of sets $(T,S_1,S_2)$ such that $ T \subset S_1 \cap S_2$ and is denoted by $f_{S_1}\sim_T f_{S_2}$.  It checks whether the local functions $f_{S_1}$ and $f_{S_2}$ agree on the intersection $T$.
Formally,
\[ f_{S_1}\sim_T f_{S_2} \qquad \Longleftrightarrow \qquad
\forall x\in T, \quad f_{S_1}(x) = f_{S_2}(x).\]

A local ensemble which is global passes all agreement checks. The converse is also true: a local ensemble that passes \emph{all} agreement checks must be global.

An \emph{agreement test} is specified by giving a distribution $\mathcal{D}$ over triples of subsets $(T,S_1,S_2)$ such tha $T \subset S_1 \cap S_2$. We define the agreement of a local ensemble to be the probability of agreement:
\[ \agree_\mathcal{D}(\set{f_S}) := \Prob[(T,S_1,S_2)\sim D]{ f_{S_1}\sim_T f_{S_2} }.
\]
An agreement theorem shows that if $\set{f_S}_S$ is a local ensemble with $agree_\mathcal{D}(\set{f_S}) > 1-\eps$ then it is close to being global.

\paragraph{Example: direct product tests} Perhaps the simplest agreement test to describe is the direct product test, in which $H$ contains all possible $k$-element subsets of $V$. For each $S$, we let $\F_S$ be all possible functions on $S$, that is $\F_S = \{f\colon S\to\Sigma\}$. The input to the test is an ensemble of local functions $\{f_S\}$, and a natural testing distribution is to choose a random $t$-element set $T$ and two $k$-element subsets $S_1,S_2$ so that they intersect on $S_1 \cap S_2 = T$, a distribution we denote $\mu_n(k,t)$. Suppose $\agree(\set{f_S}) \ge 1-\eps$. Is there a global function $F\colon V\to\Sigma$ such that $F|_S = f_S$ for most subsets $S$, where most is measured with respect to the uniform distribution $\nu_n(k)$ over $k$-elements sets? This is the content of the direct product testing theorem of Dinur and Steurer~\cite{DinurS2014-dp}.
\begin{theorem}[agreement theorem, dimension 1]\label{thm:agree1-intro}
For all positive integers $n, k, t$ satisfying $1 \leq t < k \leq n$ and alphabet
$\Sigma$, the following holds.
Let $\{f_S\colon S \to \Sigma \mid S \in \binom{[n]}{k}\}$ be an ensemble
of local functions satisfying
\[
 \Pr_{(T,S_1,S_2) \sim \mu_n(k,t)}[f_{S_1}|_{T} \neq f_{S_2}|_{T}] \leq  \epsilon.
\]
Then there exists a global function $G\colon [n] \to \Sigma$ that satisfies
\[\Pr_{S\sim \nu_n(k)}[f_S \neq G|_S] = O_\paramq\left(\eps\right),\] where $\paramq = t/k$.
\end{theorem}

The qualitatively strong aspect of this theorem is that in the conclusion, the global function agrees \emph{perfectly} with $1-O(\eps)$ of the local functions. Achieving a weaker result where perfect agreement $f_S= F|_S$ is replaced by approximate one $f_S\approx F|_S$ would be significantly easier but also less useful. Quantitatively, this is manifested in that the fraction of local functions that end up disagreeing with the global function $F$ is at most $O(\eps)$ and is \emph{independent of $n$ and $k$}. It would be significantly easier to prove a weaker result where the closeness is $O(k\eps)$ (via a union bound on the event that $F(i) = f_S(i)$).
This theorem is proven by Dinur and Steurer~\cite{DinurS2014-dp} by imitating the proof of the parallel repetition theorem \cite{Raz1998}. This theorem is also used as a component in the recent work on agreement testing on high dimensional expanders~\cite{DinurK2017}.

\subsection*{Our Results}
In order to motivate our extension of \cref{thm:agree1-intro}, let us describe it in a slightly different form. The global function $F$ can be viewed as specifying the coefficients of a linear form $\sum_{i=1}^n F(i) x_i$ over variables $x_1,\ldots,x_n$. For each $S$, the local function $f_S$ specifies the partial linear form only over the variables in $S$. This $f_S$ is supposed to be equal to $F$ on the part of the domain where $x_i=0$ for all $i\not\in S$. Given an ensemble $\{f_S\}$ whose elements are promised to agree with each other on average, the agreement theorem allows us to conclude the existence of a global linear function that agrees with most of the local pieces.

 This description naturally leads to the question of extending this to higher degree polynomials. Now, the global function is a degree $d$ polynomial with coefficients in $\Sigma$, namely $F = \sum_{T}F(T) x_T$, where we sum over subsets $T\subset [n]$, $|T|\leq d$. The local functions $f_S$ will be polynomials of degree $\leq d$, supposedly obtained by zeroing out all variables outside $S$. Two local functions $f_{S_1},f_{S_2}$ are said to agree, denoted $f_{S_1}\sim f_{S_2}$, if every monomial that is induced by $S_1\cap S_2$ has the same coefficient in both polynomials. Our new agreement theorem says that in this setting as well, local agreement implies global agreement:

\begin{theorem}[agreement theorem for high dimensions]\label{thm:agreed-intro}
For all positive integers $n, k, t, d$ satisfying $d \leq t < k \leq n$ and alphabet
$\Sigma$ the following holds.
Let $\{f_S\colon \bind{S} \to \Sigma \mid S \in \binom{[n]}{k}\}$ be an ensemble
of local functions satisfying
\[
 \Pr_{(T,S_1,S_2) \sim \mu_n(k,t)}\left[f_{S_1}|_{T} \neq f_{S_2}|_{T}\right] \leq  \epsilon,
\]
then the majority decoded function $G\colon \bnd \to \Sigma$ satisfies
\[ \Pr_{S\sim \nu_n(k)}\left[f_S \neq G|_S\right] \leq O_{d,\paramq}(\eps)\,,\] where $\paramq = t/k$ and the majority-decoded function $G$ is the one given by ``popular vote'', namely for each $A \in \bnd$ set $G(A)$ to be the most frequently occurring value among $\set{f_S(A) \mid S\supset A}$ (breaking ties arbitrarily).
\end{theorem}

For $d=1$, this theorem is precisely \cref{thm:agree1-intro} (but for the ``furthermore'' clause). The additional ``furthermore'' clause strengthens our theorem by naming the popular vote function as a candidate global function that explains most of the local functions. This addendum strengthens also \cref{thm:agree1-intro}.

Let us spell out how this theorem fits into the framework described above. The ground set is $V = \binom{[n]}{\le d}$, and the collection of subsets $H$ is the collection of all induced hypergraphs on $k$ elements. In particular, if we focus on $\Sigma=\{0,1\}$, we can view the local function of a subset $S\subset [n]$, $|S|=k$, as specifying a hypergraph on the vertices of $S$ with hyperedges of size up to $d$. The theorem says that if these small hypergraphs agree with each other most of the time, then there is a global hypergraph that they nearly all agree with.

For the special case of $d=2$ and $\Sigma=\{0,1\}$, we get an interesting statement about combining small pieces of a graph into a global one.
\begin{corollary}[agreement test for graphs]\label{cor:graphs}%[Combining local graphs]
For all positive integers $n, k, t$ satisfying $2 \leq t < k \leq n$ the following holds.

Let $\{G_S\}$ be an ensemble
of graphs, where $S$ is a $k$ element subset of $[n]$ and $G_S$ is a graph on vertex set $S$. Suppose that
\[
 \Pr_{\substack{S_1,S_2 \in \bi{[n]}{k} \\ |S_1 \cap S_2| =
     t}}\left[G_{S_1}|_{S_1 \cap S_2} = G_{S_2}|_{S_1 \cap S_2}\right] \geq 1-  \epsilon.
\]
Then there exists a single global graph $G = ([n],E)$ satisfying $\Pr_{S \in
  \bi{[n]}{k}}[G_S = G|_S] =1- O_\paramq(\epsilon)$, where $\paramq = t/k$.
\end{corollary}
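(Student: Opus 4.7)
The plan is to view \cref{cor:graphs} as an essentially immediate specialization of \cref{thm:agreed-intro} at $d = 2$ and $\Sigma = \{0,1\}$, with only a small encoding step needed. A local ensemble in the main theorem consists of functions $f_S \colon \bind{S} \to \Sigma$, i.e.\ labellings of all subsets of $S$ of size at most $d$, whereas a graph $G_S$ only labels the pairs in $S$. So first I would encode each graph $G_S$ as a Boolean function $f_S \colon \bind{S} \to \{0,1\}$ by taking $f_S(\{i,j\}) = 1$ iff $\{i,j\}$ is an edge of $G_S$, and $f_S(\emptyset) = f_S(\{i\}) = 0$ by convention. Under this encoding, $f_{S_1}|_{S_1 \cap S_2} = f_{S_2}|_{S_1 \cap S_2}$ if and only if $G_{S_1}|_{S_1 \cap S_2} = G_{S_2}|_{S_1 \cap S_2}$, since the singleton and empty-set coordinates agree trivially. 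Hence the agreement hypothesis of the corollary is exactly the agreement hypothesis of the main theorem for the ensemble $\{f_S\}$.

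Next I would check that the parameter conditions match. With $d = 2$, \cref{thm:agreed-intro} requires $n \ge Ck$, $t \ge \max\{\alpha k, 2d\} = \max\{\alpha k, 4\}$, and $k - t \ge \max\{\beta k, d\} = \max\{\beta k, 2\}$. The hypotheses $t \ge 4$, $t \ge \alpha k$, and $k - t \ge \max\{\beta k, 2\}$ supplied by \cref{cor:graphs} give exactly these, with $C$ taken to be the constant from \cref{thm:agreed-intro} at $d = 2$ and $\Sigma = \{0,1\}$ (which is then a universal constant). Applying \cref{thm:agreed-intro} produces a global function $\widehat G \colon \bnd \to \{0,1\}$ such that $\Pr_{S}[\, f_S = \widehat G|_S\,] \ge 1 - O(\epsilon)$.

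Finally I would read off the global graph by setting $E = \{\{i,j\} : \widehat G(\{i,j\}) = 1\}$ and $G = ([n], E)$. The one subtlety is that we need $f_S = \widehat G|_S$ to be equivalent to $G_S = G|_S$, which in turn requires $\widehat G$ to be identically $0$ on singletons and on $\emptyset$. This is precisely what the ``furthermore'' clause of \cref{thm:agreed-intro} buys us: the popular-vote global function must vote $0$ on every singleton and on $\emptyset$, because by construction every $f_S$ is $0$ there. (Alternatively, one could overwrite $\widehat G$ to $0$ on these coordinates by hand; this can only increase the agreement.) Either way the corollary follows. In other words, the only real ``work'' beyond invoking the main theorem is the bookkeeping around singleton and empty-set coordinates, so there is no substantial obstacle once the encoding is set up.
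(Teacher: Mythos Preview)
Your proposal is correct and is exactly the approach the paper has in mind: the paper does not give a separate proof of \cref{cor:graphs}, presenting it directly as the $d=2$, $\Sigma=\{0,1\}$ specialization of \cref{thm:agreed-intro}. Your bookkeeping around the singleton and empty-set coordinates (via the ``furthermore'' clause, or equivalently by zeroing them out by hand) is the right way to make the specialization precise.
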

Here too we emphasize that the strength of the statement is in that the conclusion talks about \emph{exact} agreement between the global graph and the local graphs, i.e.\ $G_S = G|_S$ and not $G_S \approx G|_S$, for a fraction of $1-O(\eps)$ of the sets $S$. It is also important that there is no dependence in the $O(\cdot)$ on either $n$ or $k$.
A similar agreement testing statement can be made for hypergraphs of any uniformity $\le d$.

A technical component in our proof which we wish to highlight is the \emph{reverse union bound}, which may be of independent interest.

\begin{restatable}[reverse union bound]{lemma}{rub}\label{lem:rub}
For each $c > 0$ and integer $d \geq 1$ there exists a constant $C'(d,c) = 2^{O_c(d^3)}$ such that the following holds for all $d \leq k \leq n$. Let $Y \subseteq \binom{[n]}{k}$ and $B \subseteq \binom{[n]}{d}$ satisfy
\[
 \Pr_{S \sim \nu_n(k)}[S \in Y \mid S \supseteq I] \geq c \text{ for all } I \in B\, .
\]
Then
\[
 \Pr_{S \sim \nu_n(k)}[S \text{ hits } B] \leq C'(c,d)\cdot \Pr_{S \sim \nu_n(k)}[S \in Y] \, ,
\]
where $S$ \emph{hits} $B$ if $S \supseteq I$ for some $I \in B$.
\end{restatable}

We illustrate an application of this lemma later on in the introduction.

\subsection*{Application to Structure Theorems}

The agreement theorems we prove allow for a new paradigm for studying the structure of Boolean functions on the \emph{biased} Boolean hypercube, i.e.\ when the measure is $\mu_p$ and $p$ is potentially very small, e.g. $p = o_n(1)$, $p \to 0$ as $n \to \infty$.

This paradigm is based on the following simple fact: the $p$-biased hypercube is expressible as a convex combination of many small-dimensional copies of the uniform hypercube. To uncover structure for $\mu_p$, we invoke known structure theorems for $\mu_{1/2}$, obtaining a structured approximation for each copy separately. We then sew these approximations together using the agreement theorem.
This strategy allows us to lift structure theorems from $\mu_{1/2}$ to $\mu_{p}$.

\paragraph{$GF(2)$-low degree testing in the biased hypercube}

As an illustration of this paradigm, we lift the low degree test~\cite{AlonKKLR2005,BhattacharyyaKSSZ2010} to the biased setting using a straightforward application of the agreement theorem. This is similar to the way in which the analysis of the ``uniform BLR'' test was lifted from the middle slice to an arbitrary slice by David~\etal~\cite{DavidDGKS2017}.

Alon~\etal~\cite{AlonKKLR2005} studied a $2^{d+1}$-query test $T_d$ for being of low degree. Bhattacharyya~\etal~\cite{BhattacharyyaKSSZ2010} gave an optimal analysis of this test, showing that $\delta_d(f) = O_d(\rej_d(f))$, where $\delta_d(f)$ refers to the distance of $f$ to the closest degree $d$ function under the $\mu_{1/2}$ measure (in other words, $\delta_d(f) = \min\limits_{\bdeg(g) \leq d}\Pr_{\mu_{1/2}}[f\neq g]$), and $\rej_d(f)$ is the rejection probability of the test $T_d$ on input function $f$. We would like to extend the test $T_d$ to the $p$-biased setting, wherein we measure closeness of $f$ to low degree functions with respect to the $\mu_p$ measure instead of $\mu_{1/2}$ measure. More precisely, $\delta^{(p)}_d(f) := \min\limits_{\bdeg(g) \leq d} \Pr_{\mu_p}[f \neq g].$ To this end, we study the following test $T_{p,d}$. \\

\begin{restatable}[H]{algorithm}{tpd}  \setstretch{1.35}
  \SetKwInOut{Input}{Input}
  \Input{$f\colon \bits^n \to \bits$ (provided as an oracle)}
  \BlankLine
  Pick $S \subseteq [n]$ according to the distribution $\mu_{2p}$.

  If $|S| \leq d$, then accept.

  Let $f|_S\colon \bits^S \to \bits$ denote the restriction of $f$ to $\bits^S$ by zeroing out all the coordinates outside $S$.

  Pick $x, a_1, \dots,a_{d+1} \in \bits^S$ independently from the distribution $\mu_{1/2}^{\otimes S}$, subject to the constraint that $a_1,\ldots,a_{d+1}$ are linearly independent.

  Accept iff \[\sum_{I \subseteq [d+1]} f|_S\left(x + \sum_{i \in I}a_i\right) = 0 \pmod{2}\enspace. \]

    \caption{\textsc{$p$-biased Low Degree Test $T_{p,d}$}}\label{alg:Tpd}
  \end{restatable}
\;

We use the agreement theorem to show that this natural extension is a valid low degree test for the $p$-biased setting.

\begin{restatable}[$p$-biased version of the BKSSZ Theorem]{theorem}{pBKSSZ}\label{thm:bkssz-p}
For every $d$ and $p \in (0,\nicefrac12)$, the $2^{d+1}$-query test $T_{p,d}$ satisfies the following properties:
\begin{itemize}
\item Completeness: if $\bdeg(f) \leq d$ then $\rej_{T_{d,p}}(f) = 0$.
\item Soundness: $\delta_d^{(p)}(f) = O_d(\rej_{T_{d,p}}(f))$, where the hidden constant is independent of $p$ (but depends on $d$).
\end{itemize}	
\end{restatable}

\paragraph{Kindler--Safra Junta theorem for the biased hypercube}

A second (and more involved) illustration of this paradigm is lifting the Kindler--Safra Junta theorem to the $p$-biased hypercube, even when $p$ is very small, proved by the authors~\cite{DinurFH-ks} in a parallel work to the current paper. 
Previous proofs of such statements deteriorated as $p$ becomes smaller due to the use of hypercontactivity.

The application to the Kindler-Safra Junta theorem for the $p$-baised hypercube served as the primary motivation for the agreement theorem proved in this work. We remark that while the original version of \cite{DinurFH-ks}  used \cref{thm:agreed-intro}, a subsequent revision observed that a \emph{simpler} agreement theorem that worked only for juntas sufficed. The revised version of \cite{DinurFH-ks} contains this simpler ``junta'' agreement theorem with a self-contained proof.

\subsection*{Context and Motivation}
Agreement tests were first studied implicitly in the context of PCP theorems. In fact, every PCP construction that has a composition step invariably relies on an agreement theorem. This is because in a typical PCP construction, the proof is broken into small pieces that are further encoded e.g.\ by composition or by a gadget. The soundness analysis decodes each gadget separately, thereby obtaining a collection of local views. Then, essentially through an agreement theorem, these are stitched together into one global NP witness. Similar to locally testable codes, agreement tests are a combinatorial question that is related to PCPs. Interestingly, this relation has recently been made formal by Dinur~\etal~\cite{DinurKKMS2018-2to1}, where it is proved that a certain agreement test (whose correctness was proved by Khot et al.~\cite{KhotMS2018}) \emph{formally implies} a certain rather strong unique games PCP theorem. Such a formal connection is not known to exist between LTCs and PCPs. For example, even if someone manages to construct the ``ultimate'' locally testable codes with linear length and distance, and testable with a constant number of queries, this is not known to have any implications for constructing linear size PCPs (although one may hope that such codes will be useful toward that goal).

Beyond their role in PCPs, we believe that agreement tests capture a fundamental local-to-global phenomenon, and merit study on their own. Exploring new structures that support agreement theorems seems to be an important direction. Several steps in this direction have been taken by Dinur and Kaufman~\cite{DinurK2017} and by Dikstein and Dinur~\cite{DiksteinD2019}.

\paragraph{Relation to Property Testing}
Agreement testing is similar to property testing in that we study the relation between a global object and its local views.
In property testing we have access to a single global object, and we restrict ourselves to look only at random local views of it. In agreement tests, we don't get access to a global object, but rather to an \emph{ensemble of local functions} that are not apriori guaranteed to come from a single global object. Another difference is that unlike in property testing, in an agreement test the local views are pre-specified and are a part of the problem description, rather than being part of the algorithmic solution.

Still, there is an interesting interplay between \cref{cor:graphs}, which talks about combining an ensemble of local graphs into one global graph, and graph property testing. Suppose we focus on some testable graph property, and suppose further that the test proceeds by choosing a random set of vertices and reading all of the edges in the induced subgraph, and checking that the property is satisfied there (many graph properties are testable this way, for example bipartiteness \cite{GoldreichGR1998}). Suppose we only allow ensembles $\{G_S\}$ where for each subset $S$, the local graph $G_S$ satisfies the property (e.g.\ it is bipartite). This fits into our formalism by specifying the space of allowed functions $\F_S$ to consist only of accepting local views. This is analogous to requiring, in the low degree test, that the local function on each line has low degree as a univariate polynomial. By \cref{cor:graphs}, we know that if these local graphs agree with each other with probability $1-\eps$, there is a global graph $G$ that agrees with $1-O(\eps)$ of them. In particular, this graph \emph{passes the property test}, so must itself be close to having the property! At this point it is absolutely crucial that the agreement theorem provides the stronger guarantee that $G|_S=G_S$ (and not $G|_S\approx G_S$) for $1-O(\eps)$ of the $S$'s. We can thus conclude that not only is there a global graph $G$, but actually that this global $G$ is close to having the property.

This should be compared to the low degree agreement test, where we only allow local functions with low degree, and the conclusion is that there is a global function that itself has low degree.

\subsection*{Technical Contribution}

Instead of proving \cref{thm:agreed-intro} directly, we first prove a version which employs a different test distribution $\nu_n(k,t)$. For comparison, here are the two distributions:
\begin{itemize}
\item $(T,S_1,S_2) \sim \mu_n(k,t)$ if $T$ is a random set of size $t$, and $S_1,S_2$ are two random extensions of $T$ of size $k$ such that $S_1 \cap S_2 = T$.
\item $(T,S_1,S_2) \sim \nu_n(k,t)$ if $T$ is a random set of size $T$, and $S_1,S_2$ are two random extensions of $T$ of size $k$.
\end{itemize}
In both cases, the test is $f_{S_1}|_T = f_{S_2}|_T$. For $\mu_n(k,t)$, this is the same as $f_{S_1}|_{S_1 \cap S_2} = f_{S_2}|_{S_1 \cap S_2}$, but for $\nu_n(k,t)$ it could be that $S_1 \cap S_2 \supsetneq T$.

We deduce \cref{thm:agreed-intro} from its $\nu_n(k,t)$-version via a coupling argument.

\smallskip

To explain the proof, we start with the easier case $d=1$, which is just the $\nu_n(k,t)$-analog of \cref{thm:agree1-intro}.
Given an ensemble $\{f_S\}$, it is easy to define the global function $G$, by popular vote (``majority decoding''). The main difficulty is to prove that for a typical set $S$, $f_S$ agrees with $G|_S$ on all elements $i\in S$ (and later on all $d$-sets).

Our proof doesn't proceed by defining $G$ as majority vote right away. Instead, like in many previous proofs~\cite{DinurG2008,ImpagliazzoKW2012,DinurS2014-dp}, we condition on a certain event (focusing, say, on all subsets that contain a certain set $T$, and such that $f_S|_T = \alpha$ for a certain value of $\alpha$), and define a ``restricted global'' function, for each $T$, by taking majority just among the sets in the conditioned event. This boosts the probability of agreement inside this event. After this boost, we can afford to take a union bound and safely get agreement with the restricted global function $G_T$. The proof then needs to perform another agreement step which stitches the restricted global functions $\set{G_T}_T$ into a completely global function. The resulting global function does not necessarily equal the majority vote function $G$, and a separate argument is then carried out to show that the conclusion is correct also for $G$.

In higher dimensions $d>1$, these two steps of agreement (first to restricted global and then to global) become a longer sequence of steps, where at each step we are looking at restricted functions that are defined over larger and larger parts of the domain.

The main technical difficulty is that a single event $f_S = F|_S$ consists of $\binom{k}{d}$ little events, namely $f_S(A) = F(A)$ for all $A\in \binom S d$, that each have some probability of failure. We thus need to boost the failure probability from $\eps$ to $\eps/k^d$
so that we can afford to take a union bound on the $\binom k d$ different sub-events. How do we get this large boost? Our strategy is to proceed in steps, where at each stage, we condition on the global function from the previous stage, boosting the probability of success further.

\paragraph{Majority decoding} The most natural choice for the global function $F$ in the conclusion of \cref{thm:agreed-intro} is the majority decoding, where $F(A)$ is the most common value of $f_S(A)$ over all $S$ containing $A$. This is the content of the ``furthermore'' clause in the statement of the theorem. Neither the proof strategy of Dinur and Steurer~\cite{DinurS2014-dp} nor our generalization promises that the produced global function $F$ is the majority decoding.

 Our strategy produces a global function which agrees with most local functions, but we cannot guarantee immediately that this global function corresponds to majority decoding. What we are able to show is that \emph{if} there is a global function agreeing with most of the local functions \emph{then} the function obtained via majority decoding also agrees with most of the local functions. We outline the argument below.

Suppose that $\{ f_S \}$ is an ensemble of local functions that mostly agree with each other, and suppose that they also mostly agree with some global function $F$. Let $G$ be the function obtained by majority decoding: $G(A)$ is the most common value of $f_S(A)$ over all $S$ containing $A$. Our goal is to show that $G$ also mostly agrees with the local functions, and we do this by showing that $F$ and $G$ mostly agree.

Suppose that $F(A) \neq G(A)$. We consider two cases. If the distribution of $f_S(A)$ is very skewed toward $G(A)$, then $f_S(A) \neq F(A)$ will happen very often. If the distribution of $f_S(A)$ is very spread out, then $f_{S_1}(A) \neq f_{S_2}(A)$ will happen very often. Since both events $f_S(A) \neq F(A)$ and $f_{S_1}(A) \neq f_{S_2}(A)$ are known to be rare, we would like to conclude that $F(A) \neq G(A)$ happens for very few $A$'s.

Here we face a problem: the bad events (either $f_S(A) \neq F(A)$ or $f_{S_1}(A) \neq f_{S_2}(A)$) corresponding to different $A$'s are not necessarily disjoint. A~priori, there might be many different $A$'s such that $F(A) \neq G(A)$, but the bad events implied by them could all coincide.

The missing piece is the reverse union bound, \cref{lem:rub}. We choose $Y = \{ S : f_S \neq F|_S \}$ and $B = \{ A : F(A) \neq G(A) \}$. If $A \in B$ then the probability that $f_S(A) \neq F(A)$ (over $S \supseteq A$) is at least $1/2$, since otherwise $F(A)$ would have been the majority decoded value. Therefore the premise of \cref{lem:rub} is satisfied (with $c = 1/2$), and we conclude that the probability of $F|_S \neq G|_S$, which is the same as the probability that $S$ hits $B$, is only $O(\Pr[f_S \neq F|_S]) = O(\epsilon)$.

\paragraph{Reverse union bound} The reverse union bound follows from a \emph{hypergraph pruning lemma}, which allows approximating a given hypergraph $H$ by a subhypergraph $H'\subset H$ which is sparse and has certain ``disjointness'' properties.

\begin{restatable}[hypergraph pruning lemma]{lemma}{hpl}\label{lem:pruning-uniform}
For each $\epsilon > 0$ and integer $d \geq 1$ there exists a constant $C''(d,\epsilon) = 2^{O_\epsilon(d^3)}$ such that for every $d$-uniform hypergraph $H$ there exists a $d$-uniform hypergraph $H' \subseteq H$ such that:
\begin{itemize}
\item $\Pr_{S\sim\nu_n(k)}[S \text{ hits } H'] \geq \Pr_{S\sim\nu_n(k)}[S \text{ hits } H] / C''(d,\epsilon)	$.
\item For every $I \in H'$,
\[
 \Pr_{S\sim\nu_n(k)}[\not\exists I' \in H' \setminus \{I\}, S \supseteq I' \mid S \supseteq  I ] \geq 1 - \epsilon.
\]
\end{itemize}
\end{restatable}

The hypergraph pruning lemma states that we can prune a given $d$-uniform hypergraph $H$ to a subhypergraph $H'$ such that the events ``$S \supseteq  I$'' (where $S \sim \nu_n(k)$ and $I \in H'$) are nearly disjoint, while maintaining the property that the probability that $S$ contains some hyperedge in $H'$ does not drop too much compared to the corresponding probability with respect to $H$.

The pruned hypergraph $H'$ satisfies the following crucial property, for each $I \in H'$:
\[
 \Pr_{S \sim \nu_n(k)}[S \text{ hits } H' \text{ only at } I] = \Theta\left(\Pr_{S \sim \nu_n(k)}[S \supseteq I]\right).
\]
Crucially, the events ``$S \text{ hits } H' \text{ only at } I$'' are disjoint. This allows us to quickly prove \cref{lem:rub}: applying the hypergraph pruning lemma, we prune $B$ to a smaller hypergraph $B'$. The disjointness of the aforementioned events implies that
\begin{multline*}
 \Pr[S \in Y] \geq
 \sum_{I \in B'} \Pr[S \text{ hits } B' \text{ only at } I \text{ and } S \in Y] \\ \stackrel{(\ast)}\geq \Omega(1) \cdot  \sum_{I \in B'} \Pr_{S \sim \nu_n(k)}[S \supseteq I] = \Omega(\Pr[S \text{ hits } B']) = \Omega(\Pr[S \text{ hits } B]),
\end{multline*}
where $(\ast)$ follows from the crucial property:
\[
 \Pr[S \text{ hits } B' \text{ only at } I \text{ and } S \in Y] \geq
 \Pr[S \text{ hits } B' \text{ only at } I] - \Pr[S \supseteq I \text{ and } S \in Y] = \Omega(\Pr[S \supseteq I]).
\]

The proof of \cref{lem:pruning-uniform} is inductive and elementary. It is proved by introducing the notion of \emph{branching factor}; a hypergraph $H$ over a vertex set $V$ is said to have \emph{branching factor} $\rho$ if for all subsets $A\subset V$ and integers $r \geq 0$, there are at most $\rho^r$ hyperedges in $H$ of cardinality $|A|+r$ containing $A$ (see \cref{sec:rub} and \cref{def:bf}  therein for further details). 

\subsection*{Organization}
The rest of this paper is organized as follows. We start in \cref{sec:agree1} by proving the version of \cref{thm:agree1-intro} which uses the test distribution $\nu_n(k,t)$. We then generalize this to higher dimensions in \cref{sec:agreed}.
We deduce \cref{thm:agreed-intro} in \cref{sec:agreement-other}, in which we also deduce a $p$-biased agreement theorem and an auxiliary agreement theorem required in the subsequent section.
The application to $GF(2)$-low degree testing appears in \cref{sec:reed-muller}. We conclude the paper by proving the reverse union bound in \cref{sec:rub}.

\section{Notation and Preliminaries}

We begin with some basic notation. Let $[n] = \{1,\dots, n\}$. For $k \leq n$, we will refer to a set $S \subseteq [n]$ of size $k$ as a $k$-set of $[n]$.

We will be working with several distributions on subsets of $[n]$ (and pairs, triples of subsets on $[n]$), which are defined in detail below.
\begin{description}

\item[$\nu_n(k)$:] The \emph{slice} $\binom{[n]}{k}$ consists of all $k$-subsets of $[n]$. We denote the uniform distribution over $\binom{[n]}{k}$ by $\nu_n(k)$.

\item[$\nu_n(k,t)$:] For $1 \leq t \leq k \leq n$, let $\nu_n(k,t)$ denote the distribution over triples of sets $(T,S_1,S_2)$ obtained as follows:  pick a uniformly random $t$-set $T$ of $[n]$ and then pick independently two random $k$-sets $S_1,S_2$ of $[n]$ such that $T \subset S_1, S_2$.

\item[$\mu_n(k,t)$:] The distribution $\mu_n(k,t)$, on the other hand, is the distribution over pair of sets $(S_1,S_2) $ obtained as follows: pick a uniformly random a $t$-set $T$ and then two $k$-sets $S_1$ and $S_2$  uniformly at random conditioned on $S_1 \cap S_2 = T$. We will refer to the latter distribution $\mu_n(k,t)$ as the exact-intersection distribution.

  We now define the $p$-biased version of the above distributions.

\item[$\mu_p(A)$:] For a set $A$, the distribution $\mu_p(A)$ denotes a random subset of $A$ obtained by putting it each element with probability $p$ independently. Stated differently, $\Pr[\mu_p(A) = B] = p^{|B|} (1-p)^{|A \setminus B|}$.

\item[$\mu_{p,\paramq}(A)$:] The distribution $\mu_{p,\paramq}(A)$, defined whenever $p(2-\paramq) \leq 1$, is a distribution on pairs of subsets $B_1,B_2 \subseteq A$ which are sampled as follows: for each element $x \in A$, we put $x$ in both $B_1$ and $B_2$ with probability $p\paramq$; we put $x$ only in $B_1$ or only in $B_2$ with probability $p(1-\paramq)$ each; and we put $x$ in neither with probability $1-p(2-\paramq)$.

\end{description}

\subsection{Plurality vs.\ collision}

Let $X$ be a random variable ranging over a finite set, and let $x_0$ be the most probably value. Let $Y$ be an i.i.d.\ copy of $X$. Then
\[
 \Pr[X \neq x_0] \leq \Pr[X \neq Y].
\]

To see this, let $p_x$ be the probability of $x$. Then
\[
 \Pr[X \neq Y] = \sum_x p_x (1-p_x) \geq \sum_x p_x (1-p_{x_0}) = 1 - p_{x_0} = \Pr[X \neq x_0].
\]

\subsection{Reverse union bound}

We recall the statement of the reverse union bound from the introduction.
\rub*

We prove \Cref{lem:rub} in \Cref{sec:rub}. For now, let us explain the name \emph{reverse union bound}. One could try to lower bound $\Pr[S \in Y]$ in terms of $\Pr[S \text{ hits } B]$ using the following simple argument: if $S$ hits $B$, then it contains some $I \in B$, and so the probability that $S \in Y$ should be at least $c$. In symbols, this would be
\[
 \Pr[S \in Y] \stackrel{(\ast)}\geq \sum_{I \in B} \Pr[S \supseteq I] \Pr[S \in Y \mid S \supseteq I] \geq c \sum_{I \in B} \Pr[S \supseteq I] \geq c\Pr[S \text{ hits } B].
\]
This argument is, of course, fallacious: $(\ast)$ fails since the events in question are not disjoint. The reverse union bound shows that if you are willing to pay in the constant $c$, then you can make sense of this fallacious argument.

\section{One-dimensional agreement theorem}\label{sec:agree1}

In this section, we prove the following direct product agreement
testing theorem for one dimension. This theorem
is a special case of the more general theorem (\cref{thm:agreed})
proved in the next section, and a slight strengthening of the result of Dinur
and Steurer~\cite{DinurS2014-dp} (\cref{thm:agree1-intro}). We give the proof for the
one-dimensional case as it serves as a warmup to the general
case.

\begin{theorem}[agreement theorem, dimension 1 (strengthened version of \cref{thm:agree1-intro})]\label{thm:agree1}
For all positive integers $n, k, t$ satisfying $1 \leq t < k \leq n$ and alphabet
$\Sigma$, the following holds.
Let $\{f_S\colon S \to \Sigma \mid S \in \binom{[n]}{k}\}$ be an ensemble
of local functions satisfying
\[
 \Pr_{(T,S_1,S_2) \sim \nu_n(k,t)}[f_{S_1}|_{T} \neq f_{S_2}|_{T}] \leq  \epsilon.
\]
Then the majority-decoded function $G\colon [n] \to \Sigma$ satisfies
\[\Pr_{S\sim \nu_n(k)}[f_S \neq G|_S] = O\left(\frac{\eps}{\min\left(\frac{t}{k},1-\frac{t}{k}\right)}\right),\] where
the majority-decoded function $G$ is the one given by ``popular vote'', namely for each $i \in [n]$, $G(i)$ is the most frequently occurring value among $\set{f_S(i) \mid S\ni i}$ (breaking ties arbitrarily).
\end{theorem}

\subsection{Two examples}

Consider the following two examples.
\paragraph{Example 1: ``Small non-expanding set of $k$-tuples''} Let $B$ be an arbitrary set of size $(\epsilon/2k)n$. We define $f_S$ as follows: $f_S(i) = 1_{S \cap B \neq \emptyset}$. Majority decoding gives the function $g(i) = 0$.  In this example, the probability that a random $k$-set $S_1$ intersects $B$ is $\Theta(\epsilon)$, and when that happens, the intersection with $B$ has constant size. Therefore, given that $S_1$ intersects $B$, the probability that another random $k$-set $S_2$ also intersects $B$ is roughly $t/k$. Overall, the probability that $f_{S_1},f_{S_2}$ disagree is $\Theta((1-t/k)\epsilon)$. In contrast, $\Pr[f_S \neq g|_S] = \Theta(\epsilon)$.
\paragraph{Example 2: ``Bernoulli noise''} Consider the following function: $f_S(i) \sim \operatorname{Ber}(\epsilon/2t)$, where majority decoding again gives the function $g(i) = 0$. In this example, the probability that $f_{S_1}$ and $f_{S_2}$ disagree is $\Theta(\epsilon)$, while $\Pr[f_S \neq g|_S] \approx 1 - \exp(-(k/2t)\epsilon)$, which is close to $\Theta((k/t)\epsilon)$ for $\epsilon$ much smaller than $k/t$.\\

\noindent These two examples show that the statement of \cref{thm:agree1} is tight.

We begin by considering a natural approach\footnote{This argument is adapted from the introduction of Yotam Dikstein's Master's thesis~\cite{Dikstein2019}} to proving the above theorem using the majority (or plurality) decoding, and show how this approach only obtains an error bound of $O(k \epsilon)$ instead of $O(\epsilon)$ as promised by the above result.

For the purpose of this section, we assume that $0.1 \leq t/k \leq 0.5$. Given a local ensemble of functions $\{f_S \colon S \to \Sigma \mid S \in \binom{[n]}k\}$, define the majority decoded function $g \colon [n] \to \Sigma$ as follows: $g(i)$ is the most popular value of $f_S(i)$ among sets $S$ containing $i$. Plurality vs.\ collision implies that for each $i \in [n]$, \[
 \Pr_{S \ni i}[f_S(i) \neq g(i)] \leq \Pr_{S_1,S_2 \ni i}[f_{S_1}(i) \neq f_{S_2}(i)].
\]
The distribution of the pair of sets $S_1, S_2$ in the above expression is not the same as that used in the test. To relate this to the success probability of the test, we construct the following coupled distribution $(T_1,T_2,S_1,S_2, S)$ for each $i \in [n]$: choose two $t$-sets $T_1,T_2$ containing $i$, choose two random $k$-sets $S_1$ and $S_2$ such that $T_1 \subset S_1$ and $T_2 \subset S_2$, and finally choose a random $k$-set $S$ such that $T_1 \cup T_2 \subset S$. (The last step is possible if $1+2(t-1) = 2t-1 \leq k$, which is true since $0.1 \leq t/ k \leq 0.5$). This coupling has the following property: the pair of sets $(S_1,S_2)$ are two random $k$-sets containing $i$, while for each $j \in [2]$ the pair of sets $(S_j,S)$ are two random $k$-sets containing $T_j$, which is itself a random $t$-set containing $i$.

If $f_{S_1}(i) \neq f_{S_2}(i)$ then $f_{S_j}(i) \neq f_S(i)$ for some $j \in [2]$, and so
\[
 \Pr_{S \ni i}\left[f_S(i) \neq g(i)\right]\leq \Pr_{S_1,S_2 \ni i}[f_{S_1}(i) \neq f_{S_2}(i)] \leq 2\Pr_{\substack{T \ni i \\ S_1,S_2 \supset T}}\left[f_{S_1}(i) \neq f_{S_2}(i)\right].
\]
Undoing the conditioning, and using the fact that for any fixed $i$, $\Pr\left[S\ni i \right] = \Theta(\Pr\left[ T\ni i\right])$ (since  $0.1 \leq t/k \leq 0.5$), this shows that
\[
 \Pr_S\left[S\ni i  \text{ and } f_S(i) \neq g(i)\right] \leq O\left(\Pr_{T; S_1,S_2 \supset T}\left[T\ni i \text{ and } f_{S_1}(i) \neq f_{S_2}(i)\right]\right).
\]
Summing over all $i$, this gives
\begin{align*}
 \Pr_S\left[f_S \neq g|_S\right] &\stackrel{(1)}\leq \sum_{i=1}^n \Pr_S[S\ni i \text{ and } f_S(i) \neq g(i)] \\ &=
 O\left(\sum_{i=1}^n \Pr_{T; S_1,S_2 \supset T}\left[ T \ni i\text{ and } f_{S_1}(i) \neq f_{S_2}(i)\right] \right) &&=                                                                               O\left(\EE_T \left[\sum_{T\ni i} \Pr_{S_1,S_2 \supset T}\left[f_{S_1}(i) \neq f_{S_2}(i)\right] \right]\right)\\
  & =
 O\left(\EE_{T; S_1,S_2 \supset T}\left[\#\{ T\ni i \mid  f_{S_1}(i) \neq f_{S_2}(i)\}\right]\right) &&\stackrel{(2)}\leq O(t\epsilon).
\end{align*}

\medskip

Observe that this simple argument gives a bound which is off by a multiplicative factor of $t$. Inequality $(1)$ is loose whenever $f_S\neq g|_S$ on many coordinates, and inequality $(2)$ is loose whenever $f_{S_1},f_{S_2}$ disagree on a few coordinates.

Can it really be that conditioned on $f_{S_1} \neq f_{S_2}$, the number of bad $i$, i.e, the number of $i\in T$ such that $f_{S_1}(i) \neq f_{s_2}(i)$, be as large as $t$? Yes! In Example 1, when $f_{S_1} \neq f_{S_2}$, it is always the case that $f_{S_1},f_{S_2}$ disagree on all of their intersection. Hence (2) is tight, however (1)~is loose: the contribution of $i \in B$ to the sum is $\epsilon/2$, while each of the $(1-\epsilon/(2k))n$ other indices contributes roughly $(\epsilon/2)(k/n)$, for a total of roughly $(\epsilon/2)k$. On the other hand, in example 2, ~(1) is tight but~(2) is loose, since the expected number of bad $i$ is constant. Our main result shows that (1)~and~(2) cannot both be tight at the same time.

\subsection{Proof of {\cref{thm:agree1}}}

Let $t+1 \leq k \leq n$ as in the hypothesis.
We will assume the following convention throughout this section.
\begin{itemize}
	\item $S$ will always denote a set of size $k$.
	\item $T$ will always denote a set of size $t$.
	\item $U$ will always denote a set of size $t-1$.
	\item $n' = n-(t-1)$ and $k' = k-(t-1)$.
        \end{itemize}
Define
\begin{align*}
  \zeta &:= t/(k-t+1) = t/k' \text{ and } \xi := 1-t/k .
\end{align*} Given a global function $g\fdom{[n]}$ and a local function $f_S\fdom{S}$, we say that $f_S$ and $g$ {\em disagree} if $f_S \neq g|_S$.

We now turn to the actual proof. Let $U$ be a $(t-1)$-set. In the first step (which we refer to as the relative decoding step), we define a ``global'' function $g_U\colon [n] \to \Sigma$ relative to the set $U$ and show that for a typical $(t-1)$-set $U$ and $k$-set $S$ such that $U \subset S$, we have $\Pr_{U, S\supset U}[f_S \neq g_U|_S] = O(\eps)$. In the next step (which we refer to as the sewing step), we sew together these different relative functions $g_U$ and show that there exists a $U$ such that for most $k$-sets $S$ (not necessarily $k$-sets that contain $U$), we have that $f_S = g_U|_S$. In the final step, we conclude that if such a ``global'' $g_U$ exists, then the majority decoded function also works.

\subsubsection{Part 1: relative decoding}

Let $U$ be a set of size $t-1$. We define a function $g=g_U\colon [n] \to \Sigma$ as follows (we drop the subscript $U$ as $U$ is fixed  for most of the argument below).
\begin{enumerate}
\item $g|_U$ is the most popular value of $f_S|_U$ among sets $S$ containing $U$.
\item For $i \notin U$, $g(i)$ is the most popular value of $f_S(i)$ among sets $S$ containing $U$ and satisfying $f_S|_U = g|_U$.
\end{enumerate}
Our goal in this part is to bound $\EE_U \Pr_{S \supset U}\left[f_S,g \text{ disagree}\right]$. More precisely,

\begin{lemma}\label{lem:agree1-local}
For all positive integers $n, k, t$ satisfying $1 \leq t < k \leq n$ and alphabet
$\Sigma$ the following holds.
Let $\{f_S\colon S \to \Sigma \mid S \in \binom{[n]}{k}\}$ be an ensemble
of local functions satisfying
\[
 \Pr_{(T,S_1,S_2) \sim \nu_n(k,t)}\left[f_{S_1}|_{T} \neq f_{S_2}|_{T}\right] \leq  \epsilon.
\]
Then there exists an ensemble $\{g_U\colon [n] \to \Sigma \mid U \in
\binom{[n]}{t-1}\}$ of global functions such that when a random $U \in
\binom{[n]}{t-1}$ and $S \in \binom{[n]}{k}$ are
chosen satisfying $S \supset U$, we have
\[\Pr_{U, S \supset U}\left[g_U|_S \neq f_S\right] =
O(\eps/\zeta),\] where $\zeta = t/(k-t+1)$.
\end{lemma}

\begin{proof}
We begin by defining the notion of a {\em good} element $i$ and a {\em good} $k$-set $S$.  We say that $i \notin U$ is \emph{good} if
\[
 \Pr_{S \supset U+i}\left[f_S|_U = g|_U\right] \geq \frac{1}{2}.
\]
We say that a $k$-set $S \supset U$ is \emph{good} if all $i \in S \setminus U$ are good.

We bound the quantity of interest $\Pr_{S \supset U}\left[f_S \neq g|_S\right]$ for a fixed $U$  as follows:
\begin{align*}
  \Pr_{S \supset U}\left[f_S \neq g|_S\right] & \leq \Pr_{S \supset U}\left[ S \text{ is bad}\right] + \Pr_{S \supset U}\left[ S \text{ is good  and } f_S \neq g|_S\right]\\
  & \leq \Pr_{S \supset U}\left[ S \text{ is bad}\right] + \Pr_{S \supset U}\left[f_S|_U \neq g|_U\right]  + \Pr_S\left[ S \text{ is good  and } f_S |_U= g|_U \text{ and } \exists i \in S \setminus U, f_S(i) \neq g(i)\right]\\
  & \leq \Pr_{S \supset U}\left[ S \text{ is bad}\right]+ \Pr_{S \supset U}\left[f_S|_U \neq g|_U\right] + \sum_{\substack{i \notin U \\ i \text{ good}}}\Pr_{S \supset U}\left[f_S|_U =  g|_U \text{  and }  i \in S \text{ and } f_S(i) \neq g(i)\right]\\
 & =  \underbrace{\Pr_{S \supset U}\left[\text{$S$ is bad}\right]}_{\delta} +\underbrace{\Pr_{S \supset U}\left[f_S|_U \neq g|_U\right]}_{\gamma} +
 \frac{k'}{n'} \sum_{\substack{i \notin U \\ i \text{ good}}}\underbrace{\Pr_{S \supset U+i}\left[f_S|_U = g|_U \text{ and } f_S(i) \neq g(i)\right]}_{\gamma_i}.
\end{align*}
The penultimate step follows from a union bound and the fact that for good $S$, every $i \in S \setminus U$ is good. The last step follows from the fact that $\Pr_{S \supset U}\left[ i \in S \right] = k'/n'$. Finally, we can extend the last summation to all $i$ (not necessarily good $i$) by defining $\gamma_i = 0$ for bad $i$. We bound each of the error terms $\gamma, \delta$ and $\gamma_i$ in the following three claims.

\begin{claim}[bounding $\gamma$] If $1 \leq t < k \leq n$, then $\EE_U\left[\gamma\right] \leq 2\epsilon.$\label{dlm:dim1gamma}\end{claim}
\begin{proof}By plurality vs.\ collision, $\gamma \leq \Pr_{S_1,S_2 \supset U}\left[f_{S_1}|_U \neq f_{S_2}|_U\right]$.

  Now consider the following coupling. We choose random independent $i_1,i_2 \notin U$, a random $k$-set $S_1 \supset U+i_1$, a random $k$-set $S_2 \supset U+i_2$, and a random $k$-set $S \supset U \cup \{i_1,i_2\}$. Here, we are using the fact that $k \geq (t-1)+2 = t+1$. Also, note $i_1$ might be equal to $i_2$ in the above coupling. Since $i_1,i_2$ are independent, the marginal distribution of $(S_1,S_2)$ is the same as two independent sets containing $U$, and the marginal distribution of $(S,S_j)$ is the same as two independent sets containing $U$ and one additional element.
Thus if $f_{S_1}|_U \neq f_{S_2}|_U$, it must be the case that for some $j\in\set{1,2}$, we have $f_{S_j}|_U \neq f_{S}|_U$. Therefore
\[
 \gamma \leq \Pr_{S_1,S_2 \supset U}\left[f_{S_1}|_U \neq f_{S_2}|_U\right] \leq 2\Pr_{\substack{T \supset U \\ S_j,S \supset T}}\left[f_{S_j}|_U \neq f_{S}|_U\right] \leq 2\Pr_{\substack{T \supset U \\ S_j,S \supset T}}\left[f_{S_j}|_T \neq f_{S}|_T\right].
\]
This shows that $\EE_U\left[\gamma\right] \leq 2\epsilon$.\end{proof}

\begin{claim}[bounding $\delta$]$\EE_U[\delta] = O(\epsilon)$.\label{clm:dim1rub}\end{claim}
\begin{proof}
Let $Y = \{ S \supset U \mid f_S|_U \neq g|_U \}$, so that $\Pr\left[S \in Y\right] = \gamma$. Let $B$ be the set of bad $i$. If $i$ is bad then
\[
 \Pr_{S \supset U}\left[S \in Y \mid i \in S\right] \geq \frac{1}{2}.
\]
Hence the reverse union bound shows that
\[
 \delta = \Pr_{S \supset U}\left[ S \text{ is bad}\right] = \Pr_{S \supset U}\left[S \text{ hits } B\right] = O\left(\Pr_{S \supset U}\left[S \in Y\right]\right) = O(\gamma).
\]
Hence $\EE_U\left[\delta\right] = O(\EE_U\left[\gamma\right]) = O(\epsilon)$.\end{proof}

\begin{claim}[bounding $\gamma_i$] For all $i \notin U$, $\EE_{U,i \notin U}\left[\gamma_i\right]= O(\eps/t)$.\label{clm:dim1gammai}\end{claim}
\begin{proof}
  We first consider the case when $i \notin U$ is good.
\begin{align*}
 \gamma_i &= \Pr_{S \supset U+i}\left[f_S|_U = g|_U\right] \cdot
            \Pr_{\substack{S \supset U+i \\ f_S|_U = g|_U}}\left[f_S(i) \neq g(i)\right] \\
          &\leq
 \Pr_{S \supset U+i}\left[f_S|_U = g|_U\right] \cdot
 \Pr_{\substack{S_1,S_2 \supset U+i \\ f_{S_1}|_U = f_{S_2}|_U = g|_U}}\left[f_{S_1}(i) \neq f_{S_2}(i)\right] & [\text{plurality vs. collision}]\\ &=
 \left(\Pr_{S \supset U+i}\left[f_S|_U = g|_U\right]\right)^{-1} \cdot
 \Pr_{S_1,S_2 \supset U+i}\left[f_{S_1}|_U = f_{S_2}|_U = g|_U \text{ and } f_{S_1}(i) \neq f_{S_2}(i)\right] \\ &\leq
 2\Pr_{S_1,S_2 \supset U+i}\left[f_{S_1}|_U = f_{S_2}|_U \text{ and } f_{S_1}(i) \neq f_{S_2}(i)\right],
\end{align*}
where we have used the fact that $i$ is good at the very last step. Clearly, the same bound also holds for bad $i$.

This shows that
\begin{align*}
 \EE_{U,i \notin U}\left[\gamma_i\right] &\leq
 2\EE_T \Pr_{\substack{S_1,S_2 \supset T \\ i \in T}}\left[f_{S_1}|_T \text{ and }f_{S_2}|_T \text{ disagree only on } i\right]\\&=
 \frac{2}{t}\EE_T \Pr_{S_1,S_2 \supset T}\left[f_{S_1}|_T \text{ and }f_{S_2}|_T \text{ disagree on a single element}\right] \\ &\leq
 \frac{2}{t}\EE_T \Pr_{S_1,S_2 \supset T}\left[f_{S_1}|_T \text{ and }f_{S_2}|_T \text{ disagree}\right] \\ &\leq \frac{2}{t} \epsilon. \qedhere
\end{align*}
\end{proof}

Combining the three bounds, we conclude that
\[
 \EE_{U,S \supset U}\left[f_S \neq g_U|_S\right] \leq O(\epsilon) + k' \frac{2}{t} \epsilon = O\left(\frac{\epsilon}{\zeta}\right).
\]
This completes the proof of \cref{lem:agree1-local}.
\end{proof}

\subsubsection{Part 2: sewing}

In this step, we show that  the global function $g_U$
corresponding to some $(t-1)$-set $U$ explains most local functions
$f_S$ corresponding to $S$'s not necessarily containing $U$. We will
first prove this under the assumption that $k \geq 4t$ (or equivalently $\xi \geq \nicefrac34$), and then extend it
to all $\xi \in (0,1)$.

\begin{lemma}\label{lem:agree1-global}
For all positive integers $n, k, t$ satisfying $1 \leq t < k \leq n$ and alphabet
$\Sigma$ the following holds.
Let $\{f_S\colon S \to \Sigma \mid S \in \binom{[n]}{k}\}$ be an ensemble
of local functions satisfying
\begin{equation}\label{eq:ag1-global}
 \Pr_{(T,S_1,S_2) \sim \nu_n(k,t)}\left[f_{S_1}|_{T} \neq f_{S_2}|_{T}\right] \leq  \epsilon,
\end{equation}
then there exists a global function $g\colon [n] \to \Sigma$ such that $\Pr_S\left[f_S \neq g|_S\right] = O(\eps/\xi\zeta)$, where $\zeta = t/(k-t+1)$ and $\xi = 1-t/k$.
\end{lemma}

\begin{proof}

  Let us first assume that $k \geq 4t$. \cref{lem:agree1-local} implies that
 \[\Pr_{\substack{U_1,U_2 \\ S \supset U_1 \cup U_2}}\left[g_{U_1}|_S \neq g_{U_2}|_S\right] = O(\epsilon/\zeta).\]

We now define a coupling $(U_1,U_2,S,S_1,S_2)$ as follows: Choose two random $(t-1)$-sets $U_1,U_2$. Then, choose a random $k$-set $S$ containing $U_1$.
Choose a random ordering of $S \setminus U_1$, let $X_1$ be the first $\lfloor (k-t+1)/2 \rfloor$ elements, and let $X_2$ be the subsequent $\lceil (k-t+1)/2 \rceil$ elements. Finally, choose a random $k$-set $S_1$ containing $U_1 \cup U_2 \cup X_1$, and a random $k$-set $S_2$ containing $U_1 \cup U_2 \cup X_2$. The last step is possible since $2(t-1)+\lceil (k-t+1)/2 \rceil \leq 2t+k/2 \leq k$. The marginal distribution of $(S_j,U_1,U_2)$ is random sets $U_1,U_2$ and a random set $S_j \supset U_1 \cup U_2$. The coupling ensures that $S \subset S_1 \cup S_2$, and so $g_{U_1}|_S \neq g_{U_2}|_S$ implies that $g_{U_1}|_{S_j} \neq g_{U_2}|_{S_j}$ for some $j \in \{1,2\}$. Therefore
\[
 \Pr_{\substack{U_1,U_2 \\ S \supset U_1}}\left[g_{U_1}|_S \neq g_{U_2}|_S\right] = O(\epsilon/\zeta).
\]
Fix $U := U_2$ such that
\[
 \Pr_{U_1, S \supset U_1}\left[g_{U_1}|_S \neq g_U|_S\right] = O(\epsilon/\zeta).
\]
Then $g = g_U$ satisfies
\[
 \Pr_{U_1, S \supset U_1}\left[f_S \neq g|_S\right] \leq
 \Pr_{U_1, S \supset U_1}\left[f_S \neq g_{U_1}|_S\right] +
 \Pr_{U_1, S \supset U_1}\left[g_{U_1}|_S \neq g|_S\right] = O(\epsilon/\zeta).
\]
This completes the case of small $t$.

Suppose now that $t \geq k/4$, and let $t' = \lfloor k/4 \rfloor$. For any $0< \ell \leq t$, define  \[\eps_\ell:= \Pr_{(T,S_1,S_2) \sim \nu_n(k,\ell)}\left[f_{S_1}|_{T} \neq f_{S_2}|_{T}\right].\] Given the hypothesis \eqref{eq:ag1-global} that $\eps_t \leq \eps$, we will show that $\eps_{t'} = O(\eps/\xi)$. We can then apply the preceding case and complete the proof.

Let $r < t$ such that $r \geq 2t-k$. Consider the following coupling argument. Consider the joint distribution $(R,T_1,T_2,S_1,S_2,S)$ obtained as follows: Choose a random $r$-set $R$. Choose two random $t$-sets $T_1,T_2 \supset R$. Choose a random $k$-set $S_1$ containing $T_1$, a random $k$-set $S_2$ containing $T_2$, and a random $k$-set $S$ containing $T_1 \cup T_2$. The last step is possible if $k \geq 2t-r$, i.e., if $r \geq 2t-k$.

The sets $S_1,S_2$ are two random sets containing $R$, and the sets $S,S_j$ are two random sets containing $T_j$. Furthermore, if $f_{S_1}|_R \neq f_{S_2}|_R$ then $f_{S_j}|_{T_j} \neq f_S|_{T_j}$ for some $j \in \{1,2\}$. Therefore
\[
 \Pr_{R, S_1,S_2 \supset R}\left[f_{S_1}|_R \neq f_{S_2}|_R\right] \leq \Pr_{\substack{T_1\\ S_1,S \supset T_1}}\left[f_{S_1}|_{T_1} \neq f_{S}|_{T_1}\right] + \Pr_{\substack{T_2\\S_2,S \supset T_2}}\left[f_{S_2}|_{T_2} \neq f_{S}|_{T_2}\right] \leq 2\epsilon.
\]
This demonstrates that if the hypothesis for the agreement theorem is
true for a particular choice of $n, k,t$ (i.e., $\eps_t \leq \eps$), then the hypothesis is also
true for $n,k,r$ by increasing $\epsilon$ to $2\epsilon$ (i.e., $\eps_r \leq 2\eps$) provided
$r \geq 2t-k$ or equivalently $2(1-t/k) \geq (1-r/k)$. Thus, given the hypothesis is true for some $t,k$ satisfying $\xi=1-t/k$, we can perform the above coupling argument $\lceil \log_2 \nicefrac1\xi \rceil$ times to reduce $t$ to less than $k/4$. Each step doubles the error, and so the overall error probability becomes $O(\epsilon/\xi)$. Now applying the argument for small $t$, we obtain that there exists a global $g$ such that $\Pr_S\left[f_S \neq g|_U\right] = O(\epsilon/\xi\zeta)$.
\end{proof}

\subsubsection{Part 3: majority decoding}

In this final step of the argument, we show that \cref{lem:agree1-global} can be further strenghtened by showing that the global function $g$ can in fact be the majority-decoded function.

\begin{proof}[Proof of \cref{thm:agree1}]
Let $g$ be the global function given by \cref{lem:agree1-global}.
Define $G \fdom{[n]}$ as follows: let $G(i)$ be a plurality value of $f_S(i)$ over all $S \ni i$ (breaking ties arbitrarily). Let $B = \{ i \mid g(i) \neq G(i) \}$, and let $Y = \{ S \mid f_S \neq g|_S \}$.

For each $i \in B$,
\[
 \Pr_S\left[S \in Y \mid i \in S\right] \geq \Pr_S\left[f_S(i) \neq g(i) \mid i \in S\right] \geq \frac{1}{2},
\]
since otherwise $g(i)$ would be the unique majority value. Applying the reverse union bound, we conclude that
\[
 \Pr_S\left[S \text{ hits } B\right] = O\left(\Pr_S\left[f_S \neq g|_S\right]\right) = O(\epsilon/\zeta\xi).
\]
Therefore
\[
 \Pr_S\left[f_S \neq G|_S\right] \leq
 \Pr_S\left[f_S \neq g|_S\right] + \Pr_S\left[g|_S \neq G|_S\right] =
 \Pr_S\left[f_S \neq g|_S\right] + \Pr_S\left[S \text{ hits } B\right] = O(\epsilon/\zeta\xi).
\]
\cref{thm:agree1} follows by observing that $1/\zeta\xi = O\left(\frac{1}{\min(\frac{t}{k},1-\frac{t}{k})}\right)$.
\end{proof}

\section{Agreement theorem for high dimensions}\label{sec:agreed}

\begin{theorem}[agreement theorem for high dimensions]\label{thm:agreed}
For every positive integer $d$ there exists a constant $M_d = 2^{O(d^4)}$ such that for all positive integers $n, k, t$ satisfying $d \leq t < k \leq n$ and alphabet
$\Sigma$ the following holds.
Let $\{f_S\colon \bind{S} \to \Sigma \mid S \in \binom{[n]}{k}\}$ be an ensemble
of local functions satisfying
\[
 \Pr_{(T,S_1,S_2) \sim \nu_n(k,t)}\left[f_{S_1}|_{T} \neq f_{S_2}|_{T}\right] \leq  \epsilon,
\]
then the majority decoded function $G\colon \bnd \to \Sigma$ satisfies
\[ \Pr_{S\sim \nu_n(k)}\left[f_S \neq G|_S\right] \leq \frac{M_d}{\xi\zeta^{d}} \cdot \eps\,,\] where $\zeta = \min\{1,t/(k-t+1)\}$ and $\xi=1-t/k$, and the majority-decoded function $G$ is the one given by ``popular vote'', namely for each $A \in \bnd$ set $G(A)$ to be the most frequently occurring value among $\set{f_S(A) \mid S\supset A}$ (breaking ties arbitrarily).
\end{theorem}

The same two examples showing the tightness of \cref{thm:agree1} also show that an inverse polynomial dependence on $\xi$ and $\zeta$ are needed in the above theorem. However. it is unclear if the exponential dependence on $d$ (in both the $\zeta^d$ term and the constant $M_d$) is required.

Let $t+1 \leq k \leq n$ be as in the hypothesis.
We will assume the following convention throughout this section.
\begin{itemize}
	\item $S$ will always denote a set of size $k$.
	\item $T$ will always denote a set of size $t$.
	\item $U$ will always denote a set of size $t-d$.
	\item $n' = n-(t-d)$, $k' = k-(t-d)$, $t' = t-(t-d) = d$.
        \end{itemize}

As in the 1-dimensional case, we will first (in part 1) define for each $U$ a global function $g_U\fdom{\bnd}$ such that $\Pr_{U, S \supseteq U}\left[f_S \neq g_U|_S\right] = O(\epsilon)$, then show (in part 2) that $g_U$ corresponding to a typical $U$ explains most $S$ (not necessarily ones that contain $U$) and finally (in part 3) show that the majority-decoded function works.

\subsection{Part 1: relative decoding}

\newcommand{\bset}{L}

In this part, for each $(t-d)$-set $U$, we define a function $g_U\colon \binom{[n]}{d} \to \Sigma$ such that
\[
 \Pr_{U, S \supseteq U}\left[f_S \neq g_U|_S\right] = O(\epsilon).
\]
For a large part of the proof, we will focus on a single $U$, and use $g$ for $g_U$. The definition of such a $g=g_U$ will be inspired by the corresponding definition in the 1-dimensional case. We will define $g$ incrementally. We will first define $g$ for all $d$-sets fully contained in $U$, then for $d$-sets which have at most 1 element outside $U$, then for sets which have at most 2 elements outside $U$, and so on.

First for some notation. Let $h \fdom{\bnd}$. For any $0 \leq \ell \leq d$ and any $\ell$-set $\bset  \subset [n]\setminus U$, we define the function $h^{U}[\bset ] \fdom{\binom{U}{d-\ell}}$ as follows:
\[
 h^{U}[\bset ](A) := h(A\cup \bset ).
\]

For any $k$-set $S$ such that $S \supset U \cup \bset $, we can similarly define $f_S^{U}[\bset ] \fdom{\binom{U}{d-\ell}}$.
We will say that $f_S^{U} \lagree{\ell} h^{U}$ if $f^{U}_S[\bset ] = h^{U}[\bset ]$ whenever $|\bset | \leq \ell$. If the set $U$ is clear from context, we will drop the superscript $U$ from the above notation (as in $f_S[L]$ and $f_s \lagree{\ell} h$).

\begin{lemma}\label{lem:agreed-local}
For every positive integer $d$, there exists a positive constant $\kappa_d= 2^{O(d^4)}$ such that the following is true. For all positive integers $n, k, t$ satisfying $d \leq t < k \leq n$ and alphabet
$\Sigma$ the following holds.
Let $\{f_S\colon \bind{S} \to \Sigma \mid S \in \binom{[n]}{k}\}$ be an ensemble
of local functions satisfying
\[
 \Pr_{(T,S_1,S_2) \sim \nu_n(k,t)}\left[f_{S_1}|_{T} \neq f_{S_2}|_{T}\right] \leq  \epsilon,
\]
then there exists an ensemble $\{g_U\colon \bnd \to \Sigma \mid U \in
\binom{[n]}{t-d}\}$ of global functions such that when a random $U \in
\binom{[n]}{t-d}$ and $S \in \binom{[n]}{k}$ are
chosen satisfying $S \supset U$, then
\[ \Pr_{U, S \supset U}\left[g_U|_S \neq f_S\right] \leq
\frac{\kappa_d}{\zeta^{d}}\cdot \eps,\] where $\zeta = \min\{1,t/(k-t+1)\}$.
\end{lemma}
\begin{proof}
We begin with the definition of $g_U$. Fix some $(t-d)$-set $U$. We will define $g=g_U$ by defining $g[\bset ]$ incrementally for $|\bset |= 0\ldots d$. To this end, we need the notions of $\ell$-good and $\ell$-excellent sets for $-1 \leq \ell \leq d$. To begin with, all $k$-sets $S$ are $(-1)$-excellent. For larger $\ell$, we define inductively as follows:
\begin{itemize}
\item For $\ell = 0$ to $d$:
  \begin{itemize}
  \item For all $\ell$-sets $\bset $,
    \[ g[\bset ] := \operatorname{popular}\{f_S[\bset ] \mid S \supset U \cup \bset , S \text{ is $(\ell-1)$-excellent}\}. \]
    If there are no $(\ell-1)$-excellent $k$-sets $S$, we define $g[\bset ]$ arbitrarily.
    \item An $\ell$-set $\bset $ is said to be {\em good}  if it satisfies the following condition (and said to be {\em bad} otherwise).
      \[
 \Pr_{S \supset U}\left[\text{$S$ is $(\ell-1)$-excellent} \mid S \supseteq \bset \right] \geq \frac{1}{2}.
\]
\item A $k$-set $S$ is said to be {\em $\ell$-good} if it is $(\ell-1)$-excellent and does not contain any bad $\ell$-set $\bset $.
\item A $k$-set $S$ is said to be {\em $\ell$-excellent} if it is $\ell$-good and $f_S \lagree{\ell} g$.
\end{itemize}
\end{itemize}

If $S$ is $\ell$-excellent then $f_S \lagree{\ell} g$. In particular, if $S$ is $d$-excellent then $f_S\lagree{d} g$ which implies that $f_S = g|_S$. We thus have,
\begin{align}
  \Pr_{S \supset U}\left[ f_S \neq g|_S\right] & \leq \Pr_{S \supset U}\left[ S \text{ is not $d$-excellent}\right]\label{eq:notdexcel}
\end{align}
Towards bounding this probability, we prove the following claims. For $\ell = 0,\dots,d$, let
\begin{align*}
  \gamma_\ell & := \Pr_{S \supset U}\left[ S \text{ is $\ell$-good but not $\ell$-excellent}\right]
\end{align*}

\begin{claim}\label{clm:smallint} If $n,k,t,d, \ell$ are 5 positive integers satisfying $0 \leq \ell \leq d \leq t <k \leq n$, then
  \[
    \Pr_{\substack{R\colon |R| = t-\ell\\ S_1, S_2 \supset R}}\left[ f_{S_1}|_R \neq f_{S_2}|_R\right] \leq 2^{\ell} \cdot \Pr_{\substack{T\colon |T| = t\\ S_1, S_2 \supset T}}\left[ f_{S_1}|_T \neq f_{S_2}|_T\right] .
  \]
\end{claim}
\begin{proof}
  The proof of this claim employs a coupling argument identical to the one used in the proof of \cref{dlm:dim1gamma}. Sample $(d+1)$ $k$-sets $S_{-1},S_0,\ldots, S_d$ using the following process.
  \begin{enumerate}
  \item Choose $t+1$ distinct random elements $a_0, a_1,\ldots, a_t$.
  \item For $\ell \in \set{0,1,\ldots,d}$
    \begin{itemize}
    \item Choose a random element $a'_\ell$ distinct from the $t$-set $\set{ a'_i \mid 0\leq i < \ell } \cup \set{a_i \mid \ell< i \leq t}$.
    \end{itemize}
  \item Choose a random $(k-t-1)$-set $W_{-1}$ disjoint from the $(t+1)$-set $\set{a_0,\ldots, a_t}$.
  \item Set $S_{-1} \gets \set{a_0, \ldots, a_t} \cup W_{-1}$.
  \item For $\ell \in \set{0,1,\ldots, d}$
    \begin{itemize}
    \item Choose a random $(k-t-1)$-set $W_\ell$ disjoint from the $(t+1)$-set $\set{a'_i \mid i \leq \ell } \cup \set{a_i \mid i > \ell }$.
    \item Set $S_\ell \gets \set{ a'_i \mid i \leq \ell } \cup \set{a_i \mid i > \ell } \cup W_\ell$.
    \item Set $T_\ell \gets \set{ a'_i \mid i < \ell } \cup \set{a_i \mid i > \ell }$.
    \item Set $R_\ell \gets \set{a_i \mid i > \ell }$.
    \end{itemize}
  \end{enumerate}
 Such a coupling is feasible as long as $k \geq t+1$. It follows from the definition that for any $\ell \in \set{0,1,\ldots, d}$, the triple $(T_\ell, S_{\ell-1},S_\ell)$ is distributed according to $\nu_n(k,t)$ while the triple $(R_\ell,S_{-1},S_\ell)$ is distributed according to $\nu_n(k,t-\ell)$. Furthermore, observe that $R_\ell \subseteq T_i$ for all $i \leq \ell$. Hence, if  $f_{S_{-1}}|_{R_\ell} \neq f_{S_\ell}|_{R_\ell}$, it must be the case that for some $i\in\set{1,2,\ldots,\ell}$, we have $f_{S_{i-1}}|_{T_i} \neq f_{S_i}|_{T_i}$. Hence,
\[
    \Pr\left[ f_{S_{-1}}|_{R_\ell} \neq f_{S_\ell}|_{R_\ell}\right] \leq \sum_{i=0}^\ell \Pr\left[ f_{S_{i-1}}|_{T_i} \neq f_{S_i}|_{T_i}\right] = (\ell+1) \Pr_{(T,S,S') \sim \nu_n(k,t)}\left[ f_{S}|_{T} \neq f_{S'}|_{T}\right] \,. \qedhere
  \]
\end{proof}

\begin{claim} For every positive integer $d$, there exists a positive constant  $C'_d= 2^{O(d)}$ such that  for integers $\ell, t, k, n$ satisfying $0\leq \ell \leq d \leq t < k\leq n$ the following holds.
  \[ \EE_U[\gamma_\ell] = \EE_U\left[\Pr_{S \supset U}\left[ S \text{ is $\ell$-good but not $\ell$-excellent}\right] \right] \leq \frac{C'_d\cdot \eps}{\zeta^\ell}.
  \]
\end{claim}
\begin{proof}
  The proof of this claim is, in some sense, a generalization of the proof of \cref{clm:dim1gammai}. To begin with, let us fix a $(t-d)$-set $U$.
\begin{align*}
\gamma_\ell &= \Pr_{S \supset U}\left[ S \text{ is $\ell$-good but not $\ell$-excellent}\right]\\
         &= \Pr_{S \supset U}\left[ S \text{ is $\ell$-good and } \exists \ell\text{-set } \bset \subset \overline{U}\text{ such that } S \supset \bset\text{ and }f_S[\bset] \neq g[\bset]\right] & \text{[Definition of $\ell$-excellent set]}\\
         & \leq \sum_{\bset: |\bset| = \ell, L \subset \overline{U}} \Pr_{S \supset U}\left[ S \supset \bset \text{ and } S \text{ is $\ell$-good  and } f_S[\bset] \neq g[\bset]\right] & \text{[Union Bound]}\\
         & \leq \sum_{\substack{\bset: |\bset| = \ell\\\bset \text{ good}, L \subset \overline{U}}}\Pr_{S \supset U}\left[S \supset \bset \text{ and } S \text{ is $\ell$-good  and } f_S[\bset] \neq g[\bset]\right] &[S \text{ is $\ell$-good } \&\, \bset \subset S \Rightarrow \bset \text{ is good}]\\
         & \leq \sum_{\substack{\bset: |\bset| = \ell\\\bset \text{ good}, L \subset \overline{U}}}\underbrace{\Pr_{S \supset U}\left[S \supset \bset \text{ and } S \text{ is $(\ell-1)$-excellent  and } f_S[\bset] \neq g[\bset]\right]}_{\gamma_\ell(\bset)}&[S \text{ is $\ell$-good } \Rightarrow S \text{ is $(\ell-1)$-excellent}]\\
  \end{align*}
  We now turn to bounding each term $\gamma_\ell(\bset)$ in the above sum for good $L$.
  \begin{align*}
    \gamma_\ell(\bset) &:= \Pr_{S \supset U}\left[S \supset \bset \text{ and } S \text{ is $(\ell-1)$-excellent  and } f_S[\bset] \neq g[\bset]\right]\\
                &=  \Pr_{S \supset U}\left[S \supset \bset\right] \cdot \Pr_{S\supset U \cup \bset}\left[S \text{ is $(\ell-1)$-excellent}\right]\cdot\Pr_{S \supset U \cup \bset}\left[f_S[\bset] \neq g[\bset] \mid S \text{ is $(\ell-1)$-excellent}\right]\\
                & \leq \Pr_{S \supset U}\left[S \supset \bset\right] \cdot \Pr_{S\supset U \cup \bset}\left[S \text{ is $(\ell-1)$-excellent}\right]\cdot\Pr_{S_1,S_2 \supset U \cup \bset}\left[f_{S_1}[\bset] \neq f_{S_2}[\bset] \mid S_1, S_2 \text{ are $(\ell-1)$-excellent}\right]\\
                & = \Pr_{S \supset U}\left[S \supset \bset\right] \cdot \frac{\Pr_{S_1,S_2 \supset U \cup \bset}\left[f_{S_1}[\bset] \neq f_{S_2}[\bset] \text{ and } S_1, S_2 \text{ are $(\ell-1)$-excellent}\right]}{\Pr_{S\supset U \cup \bset}\left[S \text{ is $(\ell-1)$-excellent}\right]}\\
    & \leq 2 \Pr_{S \supset U}\left[S \supset \bset\right] \cdot \Pr_{S_1,S_2 \supset U \cup \bset}\left[f_{S_1}[\bset] \neq f_{S_2}[\bset] \text{ and }S_1, S_2 \text{ are $(\ell-1)$-excellent}\right],
\end{align*}
where in the last step we have used the fact that $\bset$ is good. Plugging this back into the earlier expression, we have
\begin{align*}
  \gamma_\ell \leq \sum_{\substack{\bset: |\bset| = \ell\\\bset \text{ good}, L \subset\overline{U}}}\gamma_\ell(\bset) &\leq 2 \sum_{\substack{\bset: |\bset| = \ell\\\bset \text{ good}, L \subset \overline{U}}}\Pr_{S \supset U}\left[S \supset \bset\right] \cdot \Pr_{S_1,S_2 \supset U \cup \bset}\left[f_{S_1}[\bset] \neq f_{S_2}[\bset] \text{ and }S_1, S_2 \text{ are $(\ell-1)$-excellent}\right]\\
           &\leq 2 \sum_{\bset: |\bset| = \ell, L \subset\overline{U}}\Pr_{S \supset U}\left[S \supset \bset\right] \cdot \Pr_{S_1,S_2 \supset U \cup \bset}\left[f_{S_1}[\bset] \neq f_{S_2}[\bset] \text{ and } S_1, S_2 \text{ are $(\ell-1)$-excellent}\right]\\                                                                                            & = 2 \binom{k'}{\ell} \EE_{\bset: |\bset| = \ell, L \subset \overline{U}}\left[\Pr_{S_1,S_2 \supset U \cup \bset}\left[f_{S_1}[\bset] \neq f_{S_2}[\bset] \text{ and } S_1, S_2 \text{ are $(\ell-1)$-excellent}\right]\right]\\                                                                                              & \leq 2 \binom{k'}{\ell} \EE_{\bset: |\bset| = \ell, L \subset \overline{U}}\left[\Pr_{S_1,S_2 \supset U \cup \bset}\left[f_{S_1}[\bset] \neq f_{S_2}[\bset] \text{ and } f_{S_1} \lagree{\ell-1}f_{S_2} \right]\right] \, .
\end{align*}
We now average over $U$:
\begin{align*}
  \EE_U\left[\gamma_\ell\right] & \leq 2 \binom{k'}{\ell} \EE_{U, \bset \in \binom{\overline{U}}{\ell}}\left[\Pr_{S_1,S_2 \supset U \cup \bset}\left[f_{S_1}^U[\bset] \neq f_{S_2}^U[\bset] \text{ and } f_{S_1}^U\lagree{\ell-1}f_{S_2}^U \right]\right]\\
                  & \leq 2 \binom{k'}{\ell} \EE_{\substack{R : |R| = t-d+\ell\\\bset \in \binom{R}{\ell}}}\left[\Pr_{S_1,S_2 \supset R}\left[f_{S_1}|_R \neq f_{S_2}|_R \text{ and } f_{S_1}^{R\setminus \bset}\lagree{\ell-1}f_{S_2}^{R\setminus \bset} \right]\right]\\
  &= 2 \binom{k'}{\ell} \Pr_{\substack{R : |R| = t-d+\ell\\S_1, S_2 \supset R}}\left[f_{S_1}|_R \neq f_{S_2}|_R \right] \cdot \EE_{\substack{R : |R| = t-d+\ell\\S_1, S_2 \supset R}}\left[\Pr_{\bset \in \binom{R}{\ell}}\left[ f_{S_1}^{R\setminus \bset}\lagree{\ell-1}f_{S_2}^{R\setminus \bset} \middle| f_{S_1}|_R \neq f_{S_2}|_R\right]\right]
\end{align*}
The first probability expression in the final expression above is upper bounded by $(\ell+1)\eps$. The second probability expression is bounded as follows: Consider any $R, S_1, S_2 \supset R$ such that $f_{S_1}|_R \neq f_{S_2}|_R$. In particular, let $A\subset R$ be any $d$-set such that $f_{S_1}(A) \neq f_{S_2}(A)$. Now, if $f_{S_1}^{R\setminus \bset}\lagree{\ell-1}f_{S_2}^{R\setminus \bset}$, we must have that $\bset \subset A$. In other words, the second probability expression is bounded above $\Pr_{\bset : \bset \supset R, |\bset| = \ell}\left[\bset \subset A\right] = \binom{d}{\ell}/\binom{t-d+\ell}{\ell}$. We thus, have
\[\EE_U\left[\gamma_\ell\right] \leq \frac{2^{\ell}\binom{k'}{\ell}\cdot \binom{d}{\ell}}{\binom{t-d+\ell}{\ell}} \cdot O(\eps) = \frac{2^{O(d)}}{\zeta^\ell} \cdot \eps\,. \qedhere\]
\end{proof}

\begin{claim}For every positive integer $d$, there is a positive constant $C_d = 2^{O(d^3)}$ such that for all $\ell = 0,\dots,d$, we have
  \[\Pr_{S \supset U}\left[ S \text{ is not $\ell$-good}\right] \leq  C_d\cdot \Pr_{S \supset U}\left[ S \text{ is not $(\ell-1)$-excellent}\right].\] \end{claim}
\begin{proof}
This claim is proved using the Reverse Union Bound as in the proof of \cref{clm:dim1rub}.
Let $C_d := \max_{\ell}\{c'(\ell,\nicefrac12)\}+1= 2^{O(d^3)}$, where $c'(d,c)$ is the constant guaranteed by the Reverse Union Bound, \cref{lem:rub}.
Define $Y \subset \binom{[n]}{k}$ and $B \subset \binom{[n]}{\ell}$ as follows:
  \begin{align*}
    Y &:= \left\{S \in \binom{[n]}{k} \mid S \text{ in not $(\ell-1)$-excellent}\right\}\,,\\
    B &:= \left\{\bset \in \binom{[n]}{\ell} \mid \bset \text{ is bad}\right\}\,.
  \end{align*}
  By definition of bad set, we have that
  \[\Pr_{S \supset U}\left[S \in Y  \mid S \supseteq \bset \right] \geq \frac{1}{2}.\]
  Hence, by reverse union bound,
  \begin{align*}
    \Pr_{S \supset U}\left[ S \text{ is not $\ell$-good}\right] &\leq \Pr[ S \text{ hits } B] + \Pr _{S \supset U}[S \text{ is not $(\ell-1)$-excellent}]\\
    & \leq (C_d-1)\cdot \Pr _{S \supset U}[S \text{ is not $(\ell-1)$-excellent}]+ \Pr _{S \supset U}[S \text{ is not $(\ell-1)$-excellent}].\qedhere
  \end{align*}
  \end{proof}

We now return to our goal of bounding $\Pr_{U,S \supset U}\left[ f_S \neq g|_S\right]$ via the inequality \eqref{eq:notdexcel}:
\begin{align*}
  \EE_U\Pr_{S \supset U}\left[ f_S \neq g|_S\right] & \leq \EE_U\Pr_{S \supset U}\left[ S \text{ is not $d$-excellent}\right]\\                                                    &\leq \EE_U\Pr_{S \supset U}\left[ S \text{ is not $d$-good}\right] + \EE_U\Pr_{S \supset U}\left[ S \text{ is $d$-good but not $d$-excellent}\right] \\                                                    &\leq C_d\cdot \EE_U \Pr_{S \supset U}\left[ S \text{ is not $(d-1)$-excellent}\right] + \frac{C'_d\cdot \eps}{\zeta^d}\\                                                    & \leq C_d^2\cdot \EE_U \Pr_{S \supset U}\left[ S \text{ is not $(d-2)$-excellent}\right] +C_d \cdot \frac{C'_d\cdot \eps}{\zeta^{d-1}}+ \frac{C'_d\cdot \eps}{\zeta^d}\\                                                    &\leq C_d^\ell \cdot \EE_U\Pr_{S \supset U}\left[ S \text{ is not $(d-\ell)$-excellent}\right] + \sum_{i=1}^\ell C_d^{\ell-i} \cdot \frac{C'_d\cdot \eps}{\zeta^{d-(\ell-i)}}\\                                                    &\leq C_d^{d+1}\cdot \EE_U\Pr_{S \supset U}\left[ S \text{ is not $(-1)$-excellent}\right] + \sum_{i=1}^{d+1} C_d^{d+1-i} \cdot \frac{C'_d\cdot \eps}{\zeta^{i-1}}\\
  &= \frac{2^{O(d^4)}\cdot \eps}{\zeta^{d}}. \qedhere
\end{align*}
\end{proof}

\subsection{Part 2: sewing}

In this step, we show that  the global function $g_U$
corresponding to some $(t-d)$-set $U$ explains most local functions
$f_S$ corresponding to $S$'s not necessarily containing $U$. We will
first prove this under the assumption that $k \geq 2(d+1)t$ (or equivalently $\xi \geq 1-\nicefrac1{2(d+1)}$) and then extend it
to all $\xi \in (0,1)$.

\begin{lemma}\label{lem:agreed-global}
For every positive integer $d$, let $\kappa_d= 2^{O(d^4)}$ be the positive constant from \cref{lem:agreed-local}. For all positive integers $n, k, t$ satisfying $d \leq t < k \leq n$ and alphabet
$\Sigma$ the following holds.
Let $\{f_S\colon \bind{S} \to \Sigma \mid S \in \binom{[n]}{k}\}$ be an ensemble
of local functions satisfying
\begin{equation}\label{eq:agd-global}
 \Pr_{(T,S_1,S_2) \sim \nu_n(k,t)}\left[f_{S_1}|_{T} \neq f_{S_2}|_{T}\right] \leq  \epsilon,
\end{equation}
then there exist a global function $g \colon \bnd \to \Sigma$ such that
\[ \Pr_{S}\left[f_S \neq g|_S\right] \leq
  \frac{2(2d+4)\kappa_d}{\xi\zeta^{d}}\cdot \eps\, ,\] where $\zeta = \min\{1,t/(k-t+1)\}$ and $\xi=1-t/k$.
\end{lemma}

\begin{proof}

  Let us first assume that $k \geq 2(d+1)t$. \cref{lem:agreed-local} implies that
 \[\Pr_{\substack{U_1,U_2 \\ S \supset U_1 \cup U_2}}\left[g_{U_1}|_S \neq g_{U_2}|_S\right] \leq  \frac{2\kappa_d}{\zeta^{d}}\cdot \eps.\]

 We now define a coupling $(U_1,U_2,S,S_1,S_2)$ as follows: Choose two random $(t-d)$-sets $U_1,U_2$. Then, choose a random $k$-set $S$ containing $U_1$. Randomly partition $S \setminus U_1$ into $d+1$ sets $R_1,\ldots,R_{d+1}$ of (almost) equal size $\frac{k-t+d}{d+1}$. Finally, for each $j \in [d+1]$, choose a random $k$-set $S_j$ containing $U_1 \cup U_2 \cup [(S \setminus U_1) \setminus R_j]$. The last step is possible since $2(t-d) + \frac{d}{d+1} (k-t+d) \leq k$. The marginal distribution of $(S_j,U_1,U_2)$ is that of two random sets $U_1,U_2$ and a random set $S_j \supset U_1 \cup U_2$. The coupling ensures that $\binom{S}{d}$ is covered by $\bigcup_j \binom{S_j}{d}$ (since for any $d$ elements in $S$, there must be some $R_j$ missing). Hence,
 $g_{U_1}|_S \neq g_{U_2}|_S$ implies that $g_{U_1}|_{S_j} \neq g_{U_2}|_{S_j}$ for some $j \in \{1,\dots,d+1\}$. Therefore
\[
 \Pr_{\substack{U_1,U_2 \\ S \supset U_1}}\left[g_{U_1}|_S \neq g_{U_2}|_S\right] = \sum_{j=1}^{d+1} \Pr_{\substack{U_1,U_2 \\ S_j \supseteq U_1 \cup U_2}}[g_{U_1}|_{S_j} \neq g_{U_2}|_{S_j}]  \leq \frac{2(d+1)\kappa_d}{\zeta^{d}}\cdot \eps.
\]
Fix $U := U_2$ such that
\[
 \Pr_{U_1, S \supset U_1}\left[g_{U_1}|_S \neq g_U|_S\right] \leq \frac{2(d+1)\kappa_d}{\zeta^{d}}\cdot \eps.
\]
Then $g = g_U$ satisfies
\begin{align*}
 \Pr_{U_1, S \supset U_1}\left[f_S \neq g|_S\right] &\leq
 \Pr_{U_1, S \supset U_1}\left[f_S \neq g_{U_1}|_S\right] +
 \Pr_{U_1, S \supset U_1}\left[g_{U_1}|_S \neq g|_S\right]\\
& \leq  \frac{2\kappa_d}{\zeta^{d}}\cdot \eps + \frac{2(d+1)\kappa_d}{\zeta^{d}}\cdot \eps = \frac{(2d+4)\kappa_d}{\zeta^{d}}\cdot \eps.
\end{align*}
This completes the case of small $t$.

Suppose now that $t \geq k/2(d+1)$, and let $t' = \lfloor k/2(d+1) \rfloor$. This case is handled identically to the corresponding case in dimension one.
For any $0< \ell \leq t$, define  \[\eps_\ell:= \Pr_{(T,S_1,S_2) \sim \nu_n(k,\ell)}\left[f_{S_1}|_{T} \neq f_{S_2}|_{T}\right].\]
Mirroring the argument for dimension one, we obtain that if the hypothesis for the agreement theorem is
true for a particular choice of $n, k,t$ (i.e., $\eps_t \leq \eps$), then the hypothesis is also
true for $n,k,r$ by increasing $\epsilon$ to $2\epsilon$ (i.e., $\eps_r \leq 2\eps$) provided
$r \geq 2t-k$, or equivalently $2(1-t/k) \geq (1-r/k)$. Thus, given the hypothesis is true for some $t,k$ satisfying $\xi=1-t/k$, we can perform this argument $m:=\lceil \log_2 \nicefrac1\xi\cdot(1-\nicefrac1{2(d+1)})\rceil $ times to reduce $t$ to less than $k/2(d+1)$. Each step doubles the error, and so the overall error probability becomes $2^m\cdot \eps \leq \nicefrac2\xi(1-\nicefrac1{2(d+1)})\cdot \eps$. Now applying the argument for small $t$, we obtain that there exists a global $g$ such that
\[\Pr_S\left[f_S \neq g|_U\right] \leq  \frac{(2d+4)\kappa_d}{\zeta^{d}}\cdot \frac2\xi\left(1-\frac1{2(d+1)}\right)\cdot \eps \leq \frac{2(2d+4)\kappa_d}{\xi\zeta^{d}}\cdot \eps\, .\qedhere\]
\end{proof}

\subsection{Part 3: majority-decoding}

The proof is identical to the dimension one case.

\begin{proof}[Proof of \cref{thm:agreed}]
Let $g$ be the global function given by \cref{lem:agreed-global} that satisfies
\[\Pr_S\left[f_S \neq g|_S\right]  \leq \frac{2(2d+4)\kappa_d}{\xi\zeta^{d}}\cdot \eps =: \eta \,.\]

Define $G \fdom{\bnd}$ as follows: for each $A \in \bnd$ set $G(A)$ to be the most frequently occurring value among $\set{f_S(A) \mid S\supset A}$ (breaking ties arbitrarily).

Let $B = \left\{ A \in \bnd \mid g(A) \neq G(A) \right\}$, and let $Y = \left\{ S \mid f_S \neq g|_S \right\}$.

For each $A \in B$,
\[
 \Pr_S\left[S \in Y \mid A \subset S\right] \geq \Pr_S\left[f_S(A) \neq g(A) \mid A \subset S\right] \geq \frac{1}{2},
\]
since otherwise $g(A)$ would be the unique majority value. Applying the Reverse Union Bound~\cref{lem:rub}, we conclude that
\[
 \Pr_S\left[S \text{ hits } B\right] \leq  C'(d,\nicefrac12)\cdot \Pr_S\left[f_S \neq g|_S\right]  \leq C'(d,\nicefrac12)\cdot \eta.
\]
Therefore
\[
 \Pr_S\left[f_S \neq G|_S\right] \leq
 \Pr_S\left[f_S \neq g|_S\right] + \Pr_S\left[g|_S \neq G|_S\right] =
 \Pr_S\left[f_S \neq g|_S\right] + \Pr_S\left[S \text{ hits } B\right] = (C'(d,\nicefrac12)+1)\cdot \eta.
\]
\cref{thm:agreed} follows.
\end{proof}

\section{More agreement theorems} \label{sec:agreement-other}

In this section, we present several variants of the $d$-dimensional agreement theorem \cref{thm:agreed}.

The first variant will be the simplest variant in which the local function is of the form
\[f_S\colon \binld{S} \to \Sigma\] I.e., it specifies a value for all $\ell$-subsets $A$ of $S$ such that $\ell \leq d$ instead of just $d$-subsets of $S$.
This variation is obtained by a simple union bound over \cref{thm:agreed}, one for each integer $\ell \in [0,d]$.

\begin{theorem}\label{thm:agreed-dless}
For every positive integer $d$, let $M_d = 2^{O(d^4)}$ be the positive constant from \cref{thm:agreed}. For all positive integers $n, k, t$ satisfying $d \leq t < k \leq n$ and alphabet
$\Sigma$ the following holds.
Let $\{f_S\colon \binld{S} \to \Sigma \mid S \in \binom{[n]}{k}\}$ be an ensemble
of local functions satisfying
\[
 \Pr_{(T,S_1,S_2) \sim \nu_n(k,t)}\left[f_{S_1}|_{T} \neq f_{S_2}|_{T}\right] \leq  \epsilon,
\]
then the majority decoded function $G\colon \binld{[n]} \to \Sigma$ satisfies
\[ \Pr_{S\sim \nu_n(k)}\left[f_S \neq G|_S\right] \leq \frac{(d+1)M_d}{\xi\zeta^{d}} \cdot \eps\,,\] where $\zeta = \min\{1,t/(k-t+1)\}$ and $\xi=1-t/k$, and the majority-decoded function $G$ is the one given by ``popular vote'', namely for each $A \in \binld{[n]}$ set $G(A)$ to be the most frequently occurring value among $\set{f_S(A) \mid S\supset A}$ (breaking ties arbitrarily).
\end{theorem}

\subsection{The exact intersection setting}
The $d$-dimensional agreement theorems (\cref{thm:agreed} and \cref{thm:agreed-dless}) are both proved when the triple of sets $(S_1,S_2,T)$ are picked according to the distribution $\nu_n(k,t)$. We now extend the agreement to the exact intersection setting when the triple of sets $(S_1,S_2,T)$ are instead picked according to the distribution $\mu_n(k,t)$ as follows:
\begin{itemize}
\item Pick a random $t$-set $T \subseteq [n]$.
\item Pick two $k$-sets $S_1$ and $S_2$ uniformly from the set of pairs of $k$-sets that have exact intersection $T$. In other words,
\[ (S_1,S_2) \gets_u \left\{ (S'_1,S'_2) \in \binom{n}{k} \times \binom{n}{k} \middle| S'_1 \cap S'_2 = T \right\}.\]
\end{itemize}
Note that for the distribution $\mu_n(k,t)$ to have non-zero support, we must have $t+2(k-t) = 2k-t \leq n$.
\begin{theorem}[exact-intersection setting (strenghtenend form of \cref{thm:agreed-intro})]\label{thm:agreed-exact}
For every positive integer $d$, let $M_d = 2^{O(d^4)}$ be the positive constant from \cref{thm:agreed}. Let $\rho \in (0,1)$. For all positive integers $n, k, t$ satisfying $d \leq t < k \leq n$ and $2k-t \leq (1-\rho)n$ and any alphabet
$\Sigma$ the following holds.
Let $\{f_S\colon \binld{S} \to \Sigma \mid S \in \binom{[n]}{k}\}$ be an ensemble
of local functions satisfying
\[
 \Pr_{(T,S_1,S_2) \sim \mu_n(k,t)}\left[f_{S_1}|_{T} \neq f_{S_2}|_{T}\right] \leq  \epsilon,
\]
then the majority decoded function $G\colon \binld{[n]} \to \Sigma$ satisfies
\[ \Pr_{S\sim \nu_n(k)}\left[f_S \neq G|_S\right] \leq (1/\rho)^{O(1)} \cdot \frac{(d+1)M_d}{\xi\zeta^{d}} \cdot \eps\,,\] where $\zeta = \min\{1,t/(k-t+1)\}$ and $\xi=1-t/k$, and the majority-decoded function $G$ is the one given by ``popular vote'', namely for each $A \in \binld{[n]}$ set $G(A)$ to be the most frequently occurring value among $\set{f_S(A) \mid S\supset A}$ (breaking ties arbitrarily).
\end{theorem}
\begin{proof}
For every $0 \leq r \leq k-t$, we define the following intermediate distributions $\nu_n(k,t,r)$ on triples of sets as follows.
\begin{itemize}
	\item Choose a random $t$-set $T$.
	\item Choose a disjoint random $r$-set $R$.
	\item Choose two random $k$-sets $S_1,S_2 \subseteq T \cup R$ whose intersection is exactly $T \cup R$.
	\item Output $(S_1,S_2,T)$.
\end{itemize}
Observe that $\mu_n(k,t) = \nu_n(k,t,0)$.

Given an ensemble $\calF=\{f_S\colon \binld{S} \to \Sigma \mid S \in \binom{[n]}{k}\}$ of local functions, define for each $r \in [0,k-t]$,
\begin{equation}\label{eq:rtest} \eps_{n,k,t}^r(\calF) : = \Pr_{(T,S_1,S_2) \sim \nu_n(k,t,r)}\left[f_{S_1}|_{T} \neq f_{S_2}|_{T}\right].
\end{equation}
We are given that $\eps_{n,k,t}^0(\calF) \leq \eps$.
Applying \cref{claim:rtest} $t$ many times, we get that $\eps_{n,k,t}^r(\calF) \leq 3^t \eps$ for all $r \leq (2^t-1) \rho n$, as a simple induction shows. We are interested in a bound that works for all $r \leq k-t$, and for that it suffices to take $t_0 = O(\log \frac{k-t}{\rho n}) = O(\log (1/\rho))$. Therefore $\eps_{n,k,t}^r(\calF) \leq 3^{t_0} \eps = (1/\rho)^{O(1)} \eps$.
The theorem follows since $\nu_n(k,t)$ is a convex combination of $\nu_n(k,t,r)$ for $r \leq k-t$.
\end{proof}

\begin{claim}\label{claim:rtest}
Suppose $2k-t \leq (1-\rho)n$. If $t+r+\Delta \leq k$ and $\Delta-r \leq \rho n$, then $\eps_{n,k,t}^{r+\Delta}(\calF) \leq 3\cdot \eps_{n,k,t}^{r}(\calF)$.
\end{claim}
\begin{proof}
We prove this using the following coupling argument:
\begin{itemize}
\item Choose disjoint sets $T$ of size $t$, $R$ of size $r$, $D_1,D_2,D_3$ of size $\Delta$, and $A,B$ of size $k-t-r-\Delta$.
\item Set $S_1 \gets T \cup R \cup D_1 \cup A$.
\item Let $S_2 \gets T \cup R \cup D_2 \cup B$.
\item Let $S_3 \gets T \cup R \cup D_3 \cup A$.
\item Let $S_4 \gets T \cup R \cup D_1 \cup B$.
\end{itemize}
We can choose such sets if $t+r+3\Delta + 2(k-t-r-\Delta) \leq n$, which translates to $2k-t + (\Delta-r) \leq n$ which is true since $2k-t \leq (1-\rho)n$ and $\Delta- r \leq \rho n$.

All the sets contain $T$. The sets $S_1,S_4$ are random supersets of $T$ whose intersection is exactly $t+r+\Delta$. Thus, the marginal of the triple $(S_1,S_4,T)$ is exactly $\nu_n(k,t,r+\Delta)$.
Every two adjacent sets $S_i, S_{i+1}$ ($i \in \{1,2,3\})$ are random supersets of $T$ whose intersection is exactly $t+r$. Thus, the marginal of the triple $(S_i,S_{i+1},T)$ is exactly $\nu_n(k,t,r)$. Clearly, if $f_{S_1}|_{T} \neq f_{S_4}|_{T}$, then it must be the case that for some $i \in \{1,2,3\}$ we have $f_{S_i}|_{T} \neq f_{S_{i+1}}|_{T}$. Hence, $\eps_{n,k,t}^{r+\Delta}(\calF) \leq 3\cdot \eps_{n,k,t}^{r}(\calF)$.
\end{proof}

\subsection{The $p$-biased distribution}

The $d$-dimensional agreement theorems (\cref{thm:agreed,thm:agreed-dless,thm:agreed-exact}) are all proved with respect to the distribution $\nu_{n}(k)$, the uniform distribution over $k$-sized subsets of $[n]$. In this section, we extend these results to the $p$-biased setting $\mu_p$. In this setting, the distribution $\nu_{n}(k,t)$ is replaced by the
distribution $\mu_{p,\paramq}$, which is a distribution over pairs $S_1,S_2$
of subsets of $[n]$ defined as follows. For each element $x$
independently, we put $x$ only in $S_1$ or only in $S_2$ with
probability $p(1-\paramq)$ (each), and we put $x$ in both with probability
$p\paramq$. This is possible if $p(2-\paramq) \leq 1$. Note that if sets $S_1, S_2$ are picked according to the distribution $\mu_{p,\paramq}$ then the marginal
distribution of each of $S_1$ and $S_2$ is $\mu_p$, while the distribution of the intersection $S_1 \cap S_2$ is $\mu_{p\paramq}$. We present below the $p$-biased version of \cref{thm:agreed-exact}.

\begin{theorem}[$p$-biased setting]\label{thm:agreed-pbiased}
For every positive integer $d$, let $M_d = 2^{O(d^4)}$ be the positive constant from \cref{thm:agreed} and $\rho \in (0,1)$.
For all positive numbers $p, \paramq \in (0,1)$ satisfying $p(2-\paramq) \leq 1-\rho$, positive integer $n$ and alphabet
$\Sigma$ the following holds.
Let $\{f_S\colon \binld{S} \to \Sigma \mid S \subseteq [n] \}$ be an ensemble
of local functions satisfying
\[
 \Pr_{(S_1,S_2) \sim \mu_{p,\paramq}}\left[f_{S_1}|_{S_1 \cap S_2} \neq f_{S_2}|_{S_1\cap S_2}\right] \leq  \epsilon,
\]
then the majority decoded function $G\colon \binld{[n]} \to \Sigma$ satisfies
\[ \Pr_{S\sim \mu_p}\left[f_S \neq G|_S\right] \leq (1/\rho)^{O(1)} \cdot \frac{3(d+1)M_d}{\min\{(1-\paramq), \paramq^{d}\}} \cdot \eps\,,\] where the majority-decoded function $G$ is the one given by ``popular vote'', namely for each $A \in \binld{[n]}$ set $G(A)$ to be the most frequently occurring value among $\set{f_S(A) \mid S\supset A}$ (breaking ties arbitrarily).
\end{theorem}

\begin{proof}[Proof of \cref{thm:agreed-pbiased}]
 Let $N$ be a large integer and define $K = \lfloor pN \rfloor$, $T = \lfloor p\paramq N \rfloor$. For every $S \in \binom{[N]}{K}$, define $\tilde{f}_S := f_{S \cap [n]}$. In other words, for all $A \subset S \in \binom{[N]}{K}, |A| \leq d$, let $\tilde{f}_S(A) := f_{S\cap[n]}(A \cap [n])$. If $S_1,S_2 \sim \mu_N(K,T)$ then the distribution of $S_1 \cap [n],S_2 \cap [n]$ is close to $\mu_{p,\paramq}$, and so for large enough $N$ we have
\[
 \Pr_{S_1,S_2 \sim \mu_N(K,T)}[\tilde{f}_{S_1}|_{S_1 \cap S_2} \neq \tilde{f}_{S_2}|_{S_1 \cap S_2}] \leq 2\epsilon.
\]
Hence, the ensemble of functions $\{\tilde{f}_S\}_{S \in \binom{[N]}{K}}$ satisfies the hypothesis of the agreement theorem (\cref{thm:agreed}) with $\epsilon$ replaced by $2\eps$. Hence, by \cref{thm:agreed}, if we define $\tilde{G}\colon \binom{[N]}{\leq d} \to \Sigma$ by plurality decoding then $\Pr_{S \sim \mu_N(K)}[\tilde{f}_S \neq \tilde{G}|_S] = O_d(\epsilon)$. Since $\tilde{f}_S$ depends only on $S \cap [n]$, there exists a function $\hat{G} \colon \binom{[n]}{\leq d} \to \Sigma$ such that $\tilde{G}(T) = \hat{G}(T \cap [n])$. Moreover, for large enough $N$ the distribution of $S \cap [n]$ approaches $\mu_p$, and so $\hat{G} = G$.\yfnote{Expanded on the issue of several most common values.}\footnote{There's a fine point here: there could be several most common values, only some of which could be strictly most common for $\tilde{f}$. However, any most common value of $f$ is almost most common $\tilde{f}$, and the proof for the uniform setting can be straightforwardly adapted to handle such values.} This completes the proof.
\end{proof}

\subsection{Product distributions (for large $p$)}

The multiplicative decay in the $d$-dimensional agreement theorem, \Cref{thm:agreed-pbiased}, deteriorates as $\paramq$ goes very close to 1. For the application to testing Reed--Muller codes, we need an agreement theorem that holds for all $\paramq \in (0,1)$ but for $p\ge p_0$ some fixed constant. The following agreement theorem over the product distribution $\mu_{p,p}$ shows that it is easy to prove a counterpart of  \Cref{thm:agreed-pbiased} if one is allowed a multiplicative decay of $p^{-d}$. Note that for $p \ge p_0$, this loss is not an issue for our applications (as we think of $d$ as a constant and $p \ge p_0$ is a constant). This is no longer true when $p=o(1)$, a regime in which we need the stronger result proved in \Cref{thm:agreed-pbiased}, which is independent of $p$ (but still depends on $d$).

\begin{theorem} \label{thm:agreement-alt}
For every positive integer $d$ and alphabet $\Sigma$, the following holds for all $p$. If $\{ f_S \colon \binom{S}{\leq d} \to \Sigma \mid S \in \{0,1\}^n \}$ is an ensemble of functions satisfying
\[
 \Pr_{S_1,S_2 \sim \mu_{p,p}}[f_{S_1}|_{S_1 \cap S_2} \neq f_{S_2}|_{S_1 \cap S_2}] = \epsilon
\]
then the global function $G\colon \binom{[n]}{\leq d} \to \Sigma$ defined by plurality decoding satisfies
\[
 \Pr_{S \sim \mu_p} [f_S \neq G|_S] \leq 2p^{-d}\epsilon.
\]
\end{theorem}
\begin{proof}
 The main observation behind the proof is this: if we choose $S_1,S_2 \sim \mu_p$ independently, then $(S_1,S_2) \sim \mu_{p,p}$.

 Consider now an arbitrary $f_{S_1}$. Suppose that $f_{S_1} \neq G|_{S_1}$, and choose an entry $T$ such that $f_{S_1}(T) \neq G(T)$. If we choose $S_2 \sim \mu_p$, then $T \subseteq S_2$ with probability $p^{|T|} \geq p^d$. Moreover, since $f_{S_1}(T) \neq G(T)$, the probability that $f_{S_2}(T) = f_{S_1}(T)$ is at most $1/2$. Therefore the probability that $f_{S_1}$ and $f_{S_2}$ disagree on their intersection is at least $p^d/2$. This shows that
\[
 \epsilon = \Pr_{S_1,S_2 \sim \mu_{p,p}}[f_{S_1}|_{S_1 \cap S_2} \neq f_{S_2}|_{S_1 \cap S_2}] \geq \frac{p^d}{2} \Pr_{S_1 \sim \mu_p} [f_{S_1} \neq G|_{S_1}]. \qedhere
\]
\end{proof}

\section{Testing Reed--Muller codes} \label{sec:reed-muller}

Every Boolean function $f\colon \bits^n \to \bits$ can be written in a unique way as $P \bmod 2$, where $P$ is a \emph{Boolean polynomial}, that is, a sum of distinct multilinear monomials. The \emph{Boolean degree} of $f$, denoted $\bdeg(f)$, is the degree of this polynomial.

The well-known BLR test~\cite{BlumLR1993,BellareCHKS1996} checks whether a given Boolean function has Boolean degree~1. Alon~\etal~\cite{AlonKKLR2005} developed the following $2^{d+1}$-query test $T_d$, which is a generalization of the BLR test to large Boolean degrees.\\

\begin{algorithm}[H]
    \setstretch{1.35}
  \SetKwInOut{Input}{Input}
  \Input{$f\colon \bits^n \to \bits$ (provided as an oracle)}
  \BlankLine
  Pick $x, a_1, \dots,a_{d+1} \in \bits^n$ independently from the distribution $\mu_{1/2}^{\otimes n}$, subject to the constraint that $a_1,\ldots,a_{d+1}$ are linearly independent.

  Accept iff \[\sum_{I \subseteq [d+1]} f\left(x + \sum_{i \in I}a_i\right) = 0 \pmod{2}\enspace. \]

  \caption{\textsc{AKKLR Low Degree Test $T_{d}$}}\label{alg:Td}
\end{algorithm}
\;

This test is closely related to the Gowers norms. An optimal analysis of the test was provided by Bhattacharyya~\etal~\cite{BhattacharyyaKSSZ2010}. We need a few definitions to state their result.

\begin{definition} \label{def:d-distance}
Let $f\colon \bits^n \to \bits$ and let $d \geq 0$. The \emph{distance to degree $d$} of $f$ is defined as follows:
\[
 \delta_d(f) := \min_{\bdeg(g) \leq d} \left\{\Pr_{x\sim \mu^{\otimes n}_{1/2}}[f(x) \neq g(x)] \right\} .
\]
\end{definition}

\begin{definition} \label{def:d-test}
Let $f\colon \bits^n \to \bits$ and let $d \geq 0$. The \emph{failure probability of test $T_d$} is
\[
 \rej_d(f) := \Pr_{\substack{x,a_1,\ldots,a_{d+1} \sim \mu_{1/2}^{\otimes n}\\a_1,\ldots,a_{d+1} \text{ linearly independent}}}\left[ \sum_{I \subseteq [d+1]} f\biggl(x + \sum_{i \in I} a_i\biggr) \neq 0 \pmod{2}\right].
\]
\end{definition}

\begin{remark}
 Another variant of $T_d$, which is closer to the definition of the Gowers norms, samples $a_1,\ldots,a_{d+1}$ without requiring them to be linearly independent. When $a_1,\ldots,a_{d+1}$ are linearly dependent, the test always succeeds. On the other hand, when $n \geq d+1$, the probability that $a_1,\ldots,a_{d+1}$ are linearly dependent is upper-bounded by a positive constant. Therefore when $n \geq d+1$, removing the constraint of linear independence only affects the rejection probability by a constant factor. Finally, when $n \leq d$ the test is pointless, since every function has Boolean degree at most~$d$.
\end{remark}

\begin{theorem}[\cite{BhattacharyyaKSSZ2010}] \label{thm:bkssz}
For every integer $d \geq 1$ there exists a constant $\epsilon_d > 0$ such that for all Boolean functions $f\colon \bits^n \to \bits$,
\[ \rej_d(f) \geq \min(2^d \delta_d(f), \epsilon_d). \]
\end{theorem}

\begin{corollary} \label{cor:bkssz}
For every integer $d \geq 1$ and all Boolean functions $f\colon \bits^n \to \bits$,
\[
 \delta_d(f) = O_d(\rej_d(f)).
\]	
\end{corollary}
\begin{proof}
 If $2^d \delta_d(f) \leq \epsilon_d$ then $\delta_d(f) \leq 2^{-d} \rej_d(f)$. Else, $\rej_d(f) \geq \epsilon_d$, and so $\delta_d(f) \leq 1 \leq \epsilon_d^{-1} \rej_d(f)$.	
\end{proof}

Our goal in this section is to extend the analysis of Bhattacharyya~\etal~to the $\mu_p$ setting (wherein we measure closeness of $f$ to degree~$d$ with respect to the $\mu_p$ measure instead of the $\mu_{1/2}$ measure). More precisely:
\begin{definition} \label{def:d-distance-p}
Let $f\colon \bits^n \to \bits$ and let $d \geq 0$. The \emph{$\mu_p$-distance to degree $d$} of $f$ is defined as follows:
\[
 \delta^{(p)}_d(f) := \min_{\bdeg(g) \leq d} \left\{\Pr_{x\sim \mu_p^{\otimes n}}[f(x) \neq g(x)] \right\}.
\]
\end{definition}
To this end, we consider the natural extension $T_{p,d}$ (\cref{alg:Tpd}) of the AKKLR test $T_d$ to the $\mu_p$ measure (restated below from the introduction). \\

\tpd*
\;

Observe that the points $(x+\sum_{i\in I} a_i)$, when viewed as points in $\bits^n$ (by filling the coordinates outside $S$ with $0$'s), are distributed individually according to $\mu_p^{\otimes n}$.

Let $\rej_{T}(f)$ denote the rejection probability of a test $T$ on input function $f$. In the rest of this section, we prove \cref{thm:bkssz-p} (restated below from the introduction).

\pBKSSZ*

\begin{remark}
The case for $p\in (\nicefrac12,1)$ is proved by reversing the roles of 0 and 1. In particular, by running the test $T_{p,d}$ on the function $\overline{f} =  (x_1,\ldots,x_n) \mapsto f(1-x_1,\ldots,1-x_n)$.
\end{remark}

The main ingredients that go into proving \cref{thm:bkssz-p} are the following.
\begin{description}
\item[$d$-dimensional agreement theorem:] We will use the $d$-dimensional agreeement theorems (\cref{thm:agreed-pbiased,thm:agreement-alt}) to reduce the analysis of the test $T_{p,d}$ in the $p$-biased setting to the analysis of the standard test $T_d$. Given $p,\paramq \in (0,1)$ such that $p(2-\paramq) \leq 1$ and $\kappa \geq 1$, we will say that \emph{agreement $\kappa$-holds for $(p,\paramq)$} if the hypothesis
\[ \Pr_{(S_1,S_2) \sim \mu_{p,\paramq}} [ P_{S_1}|_{S_1\cap S_2} \neq P_{S_2}|_{S_1\cap S_2} ] \leq \eps \]
implies the conclusion that there exists a global function $P\colon \binom{[n]}{\leq d} \to \{0,1\}$ such that
\[ \Pr_{S \sim \mu_p} [P_S \equiv P|_S ] \leq \kappa\cdot \eps.\]
Stated in this language, \cref{thm:agreed-pbiased} implies that ``agreement $\kappa_{d,\paramq,\rho}$-holds for $(p,\paramq)$ whenever $p(2-\paramq) \leq 1-\rho$'', where $\kappa_{d,\paramq}$ is some constant dependending on $d,\paramq,\rho$ (but independent of $p$ and $n$). Similarly, \cref{thm:agreement-alt} implies ``agreement $2p^{-d}$-holds for $(p,p)$''.
\item[Biased Schwartz--Zippel Lemma:] The standard Schwartz--Zippel Lemma states that if $P$ is a non-zero Boolean polynomial with $\bdeg(P) \leq d$, then $\Pr_{x \sim \mu_{\nicefrac12}^{\otimes n}}[P(x) \bmod 2 = 1] \geq (\frac12)^d$. The following is an extension of this lemma to the $p$-biased setting.
\begin{claim} \label{lem:biased-schwartz-zippel}
Suppose that $P$ is a non-zero polynomial with $\bdeg(f) \leq d$. Then for all $\theta \in (0,1)$,
\[
 \Pr_{\mu_\theta}[P \bmod 2 = 1] \geq \min(\theta,1-\theta)^d =:\delta^{(d)}_\theta.
\]
\end{claim}
\begin{proof}
Let $I$ be an inclusion-maximal set such that $P$ contains the monomial $\prod_{i \in I} x_i$. For every setting $\sigma$ of the variables outside $I$, we obtain a nonzero polynomial $P|_\sigma$ (since the monomial $\prod_{i \in I} x_i$ necessarily survives). Therefore there is at least one input on which $P|_\sigma \bmod 2 = 1$. Since $P|_\sigma$ is a polynomial on at most $d$ variables, this input occurs with probability at least $\min(\theta,1-\theta)^d$ in $\mu_\theta$.
\end{proof}
\item[BKSSZ Theorem for $p=\nicefrac12$:]
We will say that a test $T$ is \emph{$\rho$-valid for $p$} if the following holds.
\begin{itemize}
\item Completeness: $\rej_T(f) = 0$ whenever $\bdeg(f) \leq d$ and
\item Soundness: $\delta^{(p)}_d(f) \leq \rho\cdot \rej_T(f)$.
\end{itemize} \Cref{cor:bkssz} states that $T_d$ (modified so that it always accepts when the dimension is at most $d$) is $\rho_d$-valid for~$1/2$ for some constant $\rho_d$ dependent on $d$ (but independent of $n$).
\end{description}
The following lemma shows how we can use the agreement theorem to reduce the analysis of the test $T_{p,d}$ to that of $T_d$.
\begin{lemma} \label{lem:bkssz-analysis}
Suppose that $T_d$ is $\rho_d$-valid for $\nicefrac12$ and agreement $\kappa$-holds for $(2p,\paramq)$ for some $p\in (0,\nicefrac12)$ and $\paramq \in (\nicefrac12,1)$ satisfying $2p(2-\paramq) \leq 1$, then $T_{p,d}$ is $\left(\frac{4\kappa}{\delta^{(d)}_{\nicefrac1{2\paramq}}}+1\right)\cdot \rho_d$-valid for for $p$, where $\delta^{(d)}_{\theta}$ is as defined in \cref{lem:biased-schwartz-zippel}.
\end{lemma}
\begin{proof}
Let us start by noticing that completeness is clear, since $\bdeg(f|_S) \leq \bdeg(f)$. It remains to prove soundness.

By construction, $\rTpd(f) = \EE_{S \sim \mu_{2p}} [\rTd(f\restr_S)]$. The assumption that $T_d$ is $\rho_d$-valid for $\nicefrac12$ guarantees for each set $S \subseteq [n]$ the existence of a Boolean polynomial $P_S$ over $\{x_i \colon  i \in S\}$ of degree at most $d$ satisfying
\[\delta_S := \Pr_{x\sim \mu_{\nicefrac12}^{\otimes S}}\left[f\restr_S(x) \neq P_S(x) \bmod{2}\right] \leq \rho_d\cdot \rTd(f\restr_S). \]
Hence, $\delta:= \EE_{S \sim \mu_{2p}}[\delta_S] \leq \rho_d \cdot \rTpd(f)$.

Our goal now is to use agreement in order to sew the various polynomials $P_S$ into a global polynomial $P$. To this end, we will view each polynomial $P_S$ as a list of coefficients. In other words, $P_S\colon \binom{S}{\leq d} \to \{0,1\}$. To apply the agreement theorem, we need to show that the polynomials $P_S$ agree with each other. We show this using the $p$-biased variant of the Schwartz--Zippel Lemma, \cref{lem:biased-schwartz-zippel}.

Let $\theta = \nicefrac1{2\paramq}$. Note that $\theta \in (\nicefrac12,1)$ since $\paramq \in (\nicefrac12,1)$.
For two sets $S \supseteq T$, the polynomial $P_S|_T$ is the polynomial obtained by zeroing out all the coefficients of monomials involving variables outside $T$. Note that this matches with the definition $P_S|_T\colon \binom{T}{\leq d} \to \{0,1\}$ when $P_S$ is viewed as a list of coefficients, $P_S\colon \binom{S}{\leq d} \to \{0,1\}$. We define
\[
 \delta_{S,T} := \Pr_{x \sim \mu_\theta^{\otimes T}}[f\restr_T(x) \neq P_S|_T(x) \bmod{2}].
\]
Since $\paramq\theta =\nicefrac12$, we have
\begin{align*}
 \EE_{S \sim\mu_{2p}}\EE_{T \sim \mu_\paramq(S)}[\delta_{S,T}] &= \EE_{S \sim\mu_{2p}}\EE_{T \sim \mu_\paramq(S)}\Pr_{x\sim \mu_\theta^{\otimes T}}[f\restr_T(x) \neq P_S|_T(x) \bmod{2}]\\
&=
\EE_{S \sim\mu_{2p}}\Pr_{x\sim \mu_{\nicefrac12}^{\otimes S}} [f\restr_S(x) \neq P_S(x) \bmod{2}]
=\EE_{S \sim \mu_{2p}} [\delta_S]  = \delta.
\end{align*}
Hence, we have by Markov's inequality that \[\Pr_{S \sim \mu_{2p}, T \sim \mu_\paramq(S)}\left[\delta_{S,T} \geq  \frac{\delta^{(d)}_\theta}2\right] \leq \frac{2}{\delta^{(d)}_\theta}\cdot \delta.\]
By the union bound
\begin{align*}
\Pr_{(S_1,S_2) \sim \mu_{2p,\paramq}}\left[ \delta_{S_1,S_1 \cap S_2} \geq \frac{\delta^{(d)}_\theta}2 \text { or } \delta_{S_2,S_1 \cap S_2} \geq \frac{\delta^{(d)}_\theta}2 \right]
& \leq 2\cdot \Pr_{(S_1,S_2) \sim \mu_{2p,\paramq}} \left[ \delta_{S_1,S_1 \cap S_2} \geq \frac{\delta^{(d)}_\theta}2 \right]\\
& = 2\cdot \Pr_{S \sim \mu_{2p}, T \sim \mu_\paramq(S)} \left[ \delta_{S,T} \geq \frac{\delta^{(d)}_\theta}2 \right] \leq \frac{4}{\delta^{(d)}_\theta}\cdot \delta.
\end{align*}
If $\delta_{S_1,S_1 \cap S_2} < \nicefrac{\delta^{(d)}_\theta}2$ and $\delta_{S_2,S_1 \cap S_2} < \nicefrac{\delta^{(d)}_\theta}2$, then we have
\[\Pr_{x\sim \mu_\theta^{\otimes S_1 \cap S_2}}\left[ P_{S_1}|_{S_1\cap S_2} (x) \neq P_{S_2}|_{S_1\cap S_2}(x)\right] < \delta^{(d)}_\theta. \]
It then follows from the $p$-biased variant of the Schwartz--Zippel Lemma, \cref{lem:biased-schwartz-zippel}, that the two Boolean degree $d$ polynomials $P_{S_1}|_{S_1\cap S_2}$ and $P_{S_2}|_{S_1\cap S_2}$ must be identical. We thus have
\[ \Pr_{(S_1,S_2) \sim \mu_{2p,\paramq}}\left[ P_{S_1}|_{S_1\cap S_2} \neq  P_{S_2}|_{S_1\cap S_2} \right] \leq \frac{4}{\delta^{(d)}_\theta}\cdot \delta. \]
We have thus satisfied the hypothesis of the agreement theorem. Using the fact that ``agreement $\kappa$-holds for $(2p,\paramq)$'', we can conclude that there exists a global polynomial $P \colon \binom{[n]}{\leq d} \to \{0,1\}$ (and hence of Boolean degree at most $d$) such that
\[ \Pr_{S \sim \mu_{2p}}[P_S \neq P|_S] \leq  \frac{4\kappa}{\delta^{(d)}_\theta}\cdot \delta \,.\]
The following argument now shows that $P \bmod 2$ is close to $f$ under the $\mu_p$-measure, completing the proof of the lemma.
\begin{align*}
\Pr_{x \sim \mu_p^{\otimes n}}\left[f(x) \neq P (x) \bmod 2\right]
& = \EE_{S\sim \mu_{2p}} \left[ \Pr_{x \sim \mu_{\nicefrac12}^{\otimes S}}\left[ f\restr_S(x) \neq P|_S(x) \bmod 2\right]\right]\\
& \leq \EE_{S\sim \mu_{2p}} \left[ \Pr_{x \sim \mu_{\nicefrac12}^{\otimes S}}\left[ f\restr_S(x) \neq P_S(x) \bmod 2\right]\right] + \EE_{S\sim \mu_{2p}} \left[ \Pr_{x \sim \mu_{\nicefrac12}^{\otimes S}}\left[ P_S(x) \neq P|_S(x) \bmod 2\right]\right]\\
& \leq \EE_{S\sim \mu_{2p}} \left[\delta_S \right] + \Pr_{S\sim \mu_{2p}} \left[P_S \neq P|_S\right]\\
&\leq \delta +  \frac{4\kappa}{\delta^{(d)}_\theta}\cdot \delta \leq \left(1+\frac{4\kappa}{\delta^{(d)}_\theta}\right)\cdot \rho_d \cdot \rTpd(f)\, .\qedhere
\end{align*}
\end{proof}
An immediate application of \cref{lem:bkssz-analysis} using the agreement theorem \cref{thm:agreed-pbiased} is that $T_{p,d}$ is $C_{p,d}$-valid for $p$ for some constant $C_{p,d}$ (depending on $p$ and $d$). We will do a slightly more careful analysis to show that $T_{p,d}$ is $C_{d}$-valid for $p$ for some constant $C_{d}$ (depending on $d$ but not on $p$).
\begin{proof}[Proof of \cref{thm:bkssz-p}] Our starting point is \cref{cor:bkssz}, which states that $T_d$ is $\rho_d$-valid for~$1/2$. The proof then follows two cases: $p$ small and $p$ large.
\begin{description}
\item[$p \in (0,0.3)$:] We set $\paramq = \nicefrac34$. Observe that $2p(2-\paramq) = \nicefrac{5p}{2} < \nicefrac34 < 1$. \cref{thm:agreed-pbiased} implies that there exists a constant $\kappa_d$ such that for all $p \in (0,0.3)$, ``agreement $\kappa_d$-holds for $(2p,\paramq)$''. Applying \cref{lem:bkssz-analysis}, we get that $T_{p,d}$ is $C^{\text{small}}_d$-valid for all $p \in (0,0.3)$, where $C^{\text{small}}_d :=\left(\frac{4\kappa_d}{\delta^{(d)}_{\nicefrac23}}+1\right)\cdot \rho_d$.
\item[$p \in [0.3,0.5)$:] We set $\paramq = 2p$. Observe that $2p(2-\paramq) = 4p(1-p) <1$ and $\paramq \in [0.6,1)$. Hence, $\nicefrac1{2\paramq} \in (\nicefrac12,\nicefrac56]$ and $\delta^{(d)}_{\nicefrac1{2\paramq}} =\min(\nicefrac1{2\paramq},1-\nicefrac1{2\paramq})^d \geq (\nicefrac16)^d$. \cref{thm:agreement-alt} implies that there exists a constant $\kappa'_d:=2(0.3)^{-d}$ such that for all  $p \in [0.3,0.5)$, ``agreement $\kappa'_d$-holds for $(2p,\paramq)$''. Applying \cref{lem:bkssz-analysis}, we get that $T_{p,d}$ is $C^{\text{large}}_d$-valid for all $p \in [0.3,0.5)$ where $C^{\text{large}}_d := \left(\frac{4\kappa'_d}{\delta^{(d)}_{\nicefrac1{2\paramq}}}+1\right)\cdot \rho_d \leq \left(4\cdot 6^d\cdot \kappa'_d+1\right)\cdot \rho_d$.
\end{description}
Thus, $T_{p,d}$ is $C_d$-valid for all $p \in (0,0.5)$ where $C_d = \max\left(C^{\text{small}}_d, C^{\text{large}}_d\right)$.
\end{proof}

\section{Reverse union bound and hyper graph pruning lemma} \label{sec:rub}

In this section we prove \Cref{lem:rub}, which we now restate.

\rub*

We will deduce \Cref{lem:rub} from another result, which we call the \emph{hypergraph pruning lemma} (restated below from the introduction):

\hpl*

We use the following notation to prove \cref{lem:rub,lem:pruning-uniform}.
If $H \subseteq \binom{[n]}{d}$ and $S \subseteq [n]$ then
\[
 \hit H(S) = \{I \in H : I \subseteq S\}.
\]
We think of $\hit H(S)$ as a random variable that depends on the distribution of $S$. With this notation, the conclusion of \cref{lem:rub}  can be rephrased as follows.
\[
 \Pr_{S\in \nu_n(k)}[\hit B(S) \neq \emptyset] \leq C'(c,d) \cdot \Pr_{S \in \nu_n(k)}[S \in Y] \,,
\]
while the conclusion of \cref{lem:pruning-uniform} can be rewritten as
\begin{itemize}
\item $\Pr_{S\sim\nu_n(k)}[\hit {H'}(S) \neq \emptyset] \geq \Pr_{S\sim\nu_n(k)}[\hit H(S) \neq \emptyset] / C''(d,\epsilon)	$.
\item For every $I \in H'$,
\[
  \Pr_{S\sim\nu_n(k)}[\hit{H'}(S) = \{I\} \mid \hit{H'}(S) \ni I] \geq 1 - \epsilon.
\]
\end{itemize}
The hypergraph pruning lemma states that you can prune a given $d$-uniform hypergraph $H$ to a subhypergraph $H'$ for which the events ``$S \supseteq I$'' (where $S \sim \nu_n(k)$ and $I \in H'$) are nearly disjoint. For this to be nontrivial, we also require that the probability that $S$ contains \emph{some} hyperedge in $H'$ does not drop too much compared to $H$.

Let us now deduce \Cref{lem:rub} from \Cref{lem:pruning-uniform}.

\begin{proof}[Proof of \Cref{lem:rub}]
We apply \Cref{lem:pruning-uniform} to $H := B$ with $\epsilon := c/2$, obtaining a subset $B' \subseteq B$. For every $I \in B'$, we have
\begin{align*}
 \Pr_{S \sim \nu_n(k)}[S \in Y \text{ and } \hit{B'}(S) = \{I\}] &\geq
 \Pr_{S \sim \nu_n(k)}[S \supseteq I] \left(
 \Pr_{S \sim \nu_n(k)}[S \in Y \mid S \supseteq I] -
 \Pr_{S \sim \nu_n(k)}[\hit{B'}(S) \supsetneq \{I\} \mid S \supseteq I]  \right) \\ &\geq  \Pr_{S \sim \nu_n(k)}[S \supseteq I] \cdot  (c-\epsilon)= \Pr_{S \sim \nu_n(k)}[S \supseteq I]\cdot  \frac{c}{2}.
\end{align*}
The events ``$S \in Y \text{ and } \hit{B'}(S) = \{I\}$'' are disjoint for different $I$, and so
\[
 \Pr_{S \sim \nu_n(k)}[S \in Y] \geq
 \sum_{I \in B'} \Pr_{S \sim \nu_n(k)}[S \in Y \text{ and } \hit{B'}(S) = \{I\}] \geq
 \frac{c}{2} \sum_{I \in B'} \Pr_{S \sim \nu_n(k)}[S \supseteq I] \geq \frac{c}{2} \Pr_{S \sim \nu_n(k)}[\hit{B'}(S) \neq \emptyset].
\]
Together with $\Pr_{S\sim\nu_n(k)}[\hit{B'}(S) \neq \emptyset] \geq \Pr_{S\sim\nu_n(k)}[\hit{B}(S) \neq \emptyset]/C''(c,d)$, this completes the proof.
\end{proof}

It will be slightly simpler to present the proof of \Cref{lem:pruning-uniform} first in the product setting, that is, when $\nu_n(k)$ is replaced with the distribution $\mu_p$, which we do in \Cref{sec:rub-product}. We then explain the changes needed to adapt it to the uniform setting in \Cref{sec:rub-uniform}.

\subsection{Proof in product setting} \label{sec:rub-product}

The analog of \Cref{lem:pruning-uniform} in the product setting has exactly the same statement, but instead of choosing $S \sim \nu_n(k)$, we now choose $S \sim \mu_p([n])$.

Instead of proving the product version of \Cref{lem:pruning-uniform} directly, we will deduce it from a more detailed lemma which involves the concept of \emph{branching factor}.

\begin{definition}[branching factor] \label{def:bf}
	A $d$-uniform hypergraph $H$ has \emph{branching factor} $\rho$ if for all sets $A$ of size at most $d$, the number of hyperedges in $H$ extending $A$ is at most $\rho^{d-|A|}$.
	
	If this holds only for sets $A$ of size at least $d-k$, we say that $H$ has \emph{$k$-bounded branching factor $\rho$}.
\end{definition}

\begin{lemma} \label{lem:pruning-product-detailed}
For each $d \geq 1$ there exist constants $c_{\max}(d) = 2^{-O(d^2)}$ and $\kappa(d) = 2^{-O(d^2)}$ such that for every $c \leq c_{\max}(d)$ there exists a constant $K(d,c) = 2^{O_c(d^3)}$ such that for all positive $p \leq \kappa(d)c$, every $d$-uniform hypergraph $H$ has a subhypergraph $H'$ such that:
\begin{itemize}
\item Hitting property: $\Pr_{S\sim\mu_p}[\hit{H'}(S) \neq \emptyset] \geq \Pr_{S\sim\mu_p}[\hit H(S) \neq \emptyset]/K(d,c)$.
\item Branching factor property: $H'$ has branching factor $c/p$.
\end{itemize}
\end{lemma}

\Cref{lem:pruning-uniform} (in its product version) follows almost immediately from \Cref{lem:pruning-product-detailed} given the following lemma.

\begin{lemma} \label{lem:bf-hit}
If $c \leq 1$ and $H$ is a $d$-uniform hypergraph with branching factor $c/p$ then for all $I \in H$,
\[
 \Pr_{S\sim\mu_p}[\hit H(S) = \{I\} \mid \hit H(S) \ni I] \geq 1 - 2^dc.
\]
\end{lemma}
\begin{proof}
Recall that $\hit H(S)$ consists of all sets in $H$ that are contained in $S$. Conditioning on $\hit H(S) \ni I$ is the same as conditioning on $S \supseteq I$. If $\hit H(S) \supsetneq \{I\}$ then $S\supset J$ for some $J\in H$. For a fixed $J$, conditioned on $S\supseteq I$, this happens iff $S \setminus I \supseteq J \setminus I$ for some $J \in H$, which happens with probability $p^{|J \setminus I|} = p^{d-|J \cap I|}$. Therefore
\[
 \Pr_{S\sim\mu_p}[\hit H(S) \supsetneq \{I\} \mid \hit H(S)  \ni I] \leq \sum_{\substack{J \in H \\ J \neq I}} p^{|J \setminus I|} \leq \sum_{K \subsetneq I} \sum_{\substack{J \in H \\ J \cap I = K}} p^{d - |K|}.
\]
For each $K \subsetneq I$, the number of $J \in H$ extending $K$ by $d - |K|$ elements is at most $(c/p)^{d-|K|}$, and so
\[
 \Pr_{S\sim\mu_p}[\hit H(S) \supsetneq \{I\} \mid \hit H(S) \ni I] \leq \sum_{K \subsetneq I} \sum_{\substack{J \in H \\ J \cap I = K}} p^{d - |K|} \leq
 \sum_{K \subsetneq I} c^{d - |K|} < 2^d c,
\]
since there are $2^d-1$ summands, and each one is at most $c$ (since $c \leq 1$).
\end{proof}

We can now deduce the product version of \Cref{lem:pruning-uniform} from \Cref{lem:pruning-product-detailed}.

\begin{proof}[Proof of product version of \Cref{lem:pruning-uniform}]
Let $c = \min(\epsilon/2^d,c_{\max}(d),1)$. If 	$p \leq \kappa(d) c$ then the lemma follows directly from \Cref{lem:pruning-product-detailed} via \Cref{lem:bf-hit}, so suppose that $p > \kappa(d) c$. If $H$ is empty then we can take $H'$ to be empty as well, so suppose that $H$ contains some hyperedge $I$. We take $H' = \{I\}$. This works since
\[
 \Pr_{S\sim\mu_p}[\hit{H'}(S) \neq \emptyset] = p^d \geq (\kappa(d) c)^d \geq \Pr_{S\sim\mu_p}[\hit H(S) \neq \emptyset]/(1/\kappa(d) c)^d,
\]
and $1/\kappa(d) c = 2^{O(d^2)} \max(2^d/\epsilon,1/c_{\max}(d),1)$.
\end{proof}

In the sequel we will need the following simple corollary of \Cref{lem:bf-hit}.

\begin{corollary} \label{cor:bf-hit}
If $c \leq 1$ and $H$ is a $d$-uniform hypergraph with branching factor $c/p$ then
\[
 (1 - 2^dc) p^d |H| \leq \Pr_{S\sim\mu_p}[\hit H(S) \neq \emptyset] \leq p^d |H|.
\]	
\end{corollary}
\begin{proof}
The upper bound follows from the union bound, since the probability of hitting any particular set is $p^d$. The lower bound follows from
\[
 \Pr_{S\sim\mu_p}[\hit H(S)  \neq \emptyset] \geq \sum_{I \in H} \Pr_{S\sim\mu_p}[\hit H(S) = \{I\}] = p^d \sum_{I \in H} \Pr_{S\sim\mu_p}[\hit H(S) = \{I\} \mid \hit H(S) \ni I]
\]
via an application of \Cref{lem:bf-hit}.
\end{proof}

\medskip

The proof of \Cref{lem:pruning-product-detailed} is by induction. The case $d = 1$ is almost trivial, and the case $d = 2$ illustrates most of the ideas appearing in the general case. We therefore start by proving the lemma in these two cases, and only then present the general inductive argument.

Below, we will write $\Pr[\hit H \neq \emptyset]$ as a shortcut for $\Pr_{S\sim\mu_p}[\hit H(S) \neq \emptyset]$.

\subsubsection{Proof for \texorpdfstring{$d = 1$}{d=1}}

The idea is that there are two cases: either $H$ is \emph{sparse} (contains at most $c/p$ elements) or it is \emph{dense}. In the first case, take $H'=H$ and there is nothing to prove. In the second case, if we retain an arbitrary subset of $\lfloor c/p \rfloor$ elements (``pruning''), then a simple calculation shows that the hitting property is satisfied, essentially since the probability of hitting a set of $\Theta(1/p)$ elements is constant.

\begin{proof}[Proof of \Cref{lem:pruning-product-detailed} for $d = 1$]

We choose $c_{\max}(1) = 1/3$ and $\kappa(1) = 1$. If $H$ has branching factor $c/p$ then $H' := H$ satisfies the conditions of the lemma, so suppose that $|H| > c/p$. Let $H'$ be an arbitrary choice of $\lfloor c/p \rfloor$ elements of $H$. Note that $c/p \geq 1$ (since we assume that $p \leq c$), and so $\lfloor c/p \rfloor \geq c/(2p)$ (the worst case is when $c/p = 2-\delta$). Clearly $H'$ has branching factor $c/p$, and by \Cref{cor:bf-hit},
\[
 \Pr[\hit {H'} \neq \emptyset] \geq (1-2c) p |H| \geq \frac{1}{3} \cdot p \cdot \frac{c}{2p} = \frac{c}{6}. \qedhere
\]	
\end{proof}

This argument requires $c_{\max}(1) < 1/2$, in order for the lower bound in \Cref{cor:bf-hit} to be nontrivial. However, it is clear that we get a nontrivial bound even if $c \geq 1/2$. Nevertheless, we use \Cref{cor:bf-hit} to simplify calculations in more complicated situations.

\subsubsection{Proof for \texorpdfstring{$d = 2$}{d=2}}

When $d = 2$, the hypergraph $H$ is a graph. There are two ways for $H$ to fail to satisfy the branching factor condition. First, $H$ could have vertices of high degree (more than $c/p$). Second, it could simply contain too many edges (more than $(c/p)^2$).

Let us first give a sketch before the formal proof. For the first case, assuming the second case doesn't hold, there are at most $2c/p$ high degree vertices. For each one choose an arbitrary subset of $c/p$ edges touching it and remove the rest. The resulting graph will actually have degree up to $4c/p$, so we will have to dial down $c$.
We handle the second case by arbitrarily retaining $(c/p)^2$ edges. Since we can assume that there are no high-degree vertices, we can appeal to \Cref{cor:bf-hit} in order to lower-bound the hitting probability of the resulting graph.

\begin{proof}[Proof of \Cref{lem:pruning-product-detailed} for $d = 2$]
We choose $c_{\max}(2) = 1/5$ and $\kappa(2) = 1/2$.

Let $\gamma = c/2$, and note that $\gamma/p \geq 1$. Let $B$ be the set of vertices of degree at least $\gamma/p$, and let $K$ be the graph resulting from $H$ by removing all edges incident to a vertex in $B$. Clearly
\[
 \Pr[\hit H \neq \emptyset] \leq \Pr[\hit B \neq \emptyset] + \Pr[\hit K \neq \emptyset],
\]
and so one of the two latter probabilities is at least $\Pr[\hit H \neq \emptyset]/2$. We consider these two cases separately.

\paragraph{Case 1: $\Pr[\hit K \neq \emptyset] \geq \Pr[\hit H \neq \emptyset]/2$.} If $|K| \leq (\gamma/p)^2$ then we can take $H' := K$. Otherwise, let $H'$ be an arbitrary subset of $K$ consisting of $\lfloor (\gamma/p)^2 \rfloor \geq \gamma^2/(2p^2)$ edges. By construction, $H'$ has branching factor $\gamma/p \leq c/p$, and so \Cref{cor:bf-hit} shows that
\[
 \Pr[\hit K \neq \emptyset] \geq (1-4\gamma) p^2 \frac{\gamma^2}{2p^2} = \Omega(\gamma^2) \geq \Omega(\gamma^2) \Pr[\hit H \neq \emptyset],
\]
since $4\gamma \leq 2/5$.

\paragraph{Case 2: $\Pr[\hit B \neq \emptyset] \geq \Pr[\hit H \neq \emptyset]/2$.} We start by applying \Cref{lem:pruning-product-detailed} inductively on $B$ with the parameter $\gamma$; note $\gamma \leq c_{\max}(2)/2 \leq c_{\max}(1)$ and $p \leq c/2 \leq \gamma = \kappa(1) \gamma$. This results in a set $B' \subseteq B$ of size at most $\gamma/p$ such that
\[
 \Pr[\hit{B'} \neq \emptyset] \geq \Pr[\hit B \neq \emptyset]/K(1,\gamma) \geq \Pr[\hit H \neq \emptyset]/(2K(1,\gamma)).
\]

Each vertex $i \in B'$ had degree at least $\gamma/p$ in the original graph $H$. We choose a set $N_i$ of $\lfloor \gamma/p \rfloor \geq \gamma/(2p)$ edges adjacent to $i$ for each $i \in B'$, and form $H'$ by taking the union of the sets $N_i$ for all $i \in B'$. Since every edge is adjacent to at most two vertices, each edge appears in at most two sets $N_i$, and so $|H'| \geq \gamma/(4p) \cdot |B'|$.

The graph $H'$ has branching factor $c/p$. Indeed, the degree of each vertex $i \in B'$ is at most $N_i + |B'| \leq 2(\gamma/p) = c/p$, since each other $N_j$ contributes at most one neighbor of $i$. Similarly, the degree of any other vertex is at most $|B'| \leq c/p$. The total number of edges is at most $\gamma/p \cdot |B'| \leq (\gamma/p)^2 \leq (c/p)^2$. Therefore $H'$ has branching factor $c/p$.

Applying \Cref{cor:bf-hit} twice, we see that
\[
 \Pr[\hit{H'} \neq \emptyset] \stackrel{(1)}\geq (1-4c) p^2 |H'| \geq \frac{p^2}{5} \frac{c}{8p} |B'| \stackrel{(2)}\geq \frac{c}{40} \Pr[\hit{B'} \neq \emptyset] \geq \frac{c}{80K(1,\gamma)} \Pr[\hit H \neq \emptyset],
\]
completing the proof.
\end{proof}

\subsubsection{General case}

The general case is quite similar to the case $d = 2$. There are now $d$ ways for the hypergraph to fail to satisfy the branching factor property, and accordingly we decompose $H$ into $d$ parts (in fact, we use $d+1$ parts).

Let $B_1$ be all $(d-1)$-sets with high degree in $H$. Let $H_1$ be the result of removing from $H$ all extensions of sets in $B_1$.
Let $B_2$ be all $(d-2)$-sets with high degree in $H_1$, and let $H_2$ be the result of removing from $H_1$ all extensions of sets from $B_2$. We continue in this manner and consider $B_1,B_2,\ldots,B_d,H_d$. Since $\Pr[\hit H\neq\emptyset] \ge \Pr[\hit{B_1}\neq\emptyset]+\cdots +\Pr[\hit{B_d}\neq\emptyset]+\Pr[\hit{H_d}\neq\emptyset]$ at least one of these is non-negligible and we use it to inductively construct the hypergraph $H'$ with bounded branching factor.
\begin{proof}[Proof of \Cref{lem:pruning-product-detailed}]
The proof is by induction on $d$. In the base case $d = 0$ we can take $c_{\max}(0) = \kappa(0) = 1$ and $K(0,c) = 1$, since $H' = H$ always works.

Suppose now that $d \geq 1$. We choose
\begin{align*}
&c_{\max}(d) = \min(c_{\max}(0)/2^{d-1}, \ldots, c_{\max}(d-1)/2^{d-1},1/2^{d+1}), \\
&\kappa(d) = \min(\kappa(0),\ldots,\kappa(d-1))/2^{d-1}, \\
&K(d,c) = 2^{3d} \max(K(0,c/2^{d-1}), \ldots, K(d-1,c/2^{d-1}))/c.
\end{align*}
It is not hard to verify that $c_{\max}(d),\kappa(d) \geq 2^{-O(d^2)}$. As for $K(d,c)$, it is not hard to check that $K(d,c)$ is monotone in $d$, and so
\[
 K(d,c) = \frac{2^{3d}}{c} K\left(d-1, \frac{c}{2^{d-1}}\right) = \frac{2^{3d}}{c} \cdot \frac{2^{3(d-1)} \cdot 2^{d-1}}{c} K\left(d-2, \frac{c}{2^{(d-1)+(d-2)}}\right),
\]
and so on. There are $d$ factors, each of which is $2^{O(d^2)}/c$, and so $K(d,c) = 2^{O(d^3)}/c^d$.

\smallskip

Let $\gamma = c/2^{d-1}$, and note that $\gamma/p \geq 1$. We decompose $H$ into $d+1$ parts, as follows. Let $H_0 := H$. For $e = 1, \ldots, d$:
\begin{enumerate}
\item Let $B_e$ be the $(d-e)$-uniform hypergraph consisting of all sets $A$ with at least $(\gamma/p)^e$ extensions in $H_{e-1}$.
\item Let $H_e$ be the subhypergraph of $H_{e-1}$ obtained by removing all extensions of hyperedges in $B_e$.
\end{enumerate}
Clearly $\Pr[\hit{H_{e-1}} \neq \emptyset] \leq \Pr[\hit {B_e} \neq \emptyset] + \Pr[\hit {H_e} \neq \emptyset]$. Expanding this, we get
\[
 \Pr[\hit H \neq \emptyset] \leq \sum_{e=1}^d \Pr[\hit{B_e} \neq \emptyset] + \Pr[\hit{H_d} \neq \emptyset].
\]
There are $d+1$ summands on the right-hand side, and one of them needs to be at least $\Pr[\hit{H} \neq \emptyset]/(d+1)$. If $\Pr[\hit{H_d} \neq \emptyset] \geq \Pr[\hit H \neq \emptyset]/(d+1)$ then we choose $H' := H_d$. By construction, $H'$ has branching factor $\gamma/p \leq c/p$, completing the proof in this case.

Suppose next that $\Pr[\hit{B_e} \neq \emptyset] \geq \Pr[\hit H \neq \emptyset]/(d+1)$ for some $e \in \{1,\dots,d\}$. We apply the lemma inductively on $B_e$ and $c := \gamma$ to get a $(d-e)$-uniform hypergraph $B'_e \subseteq B_e$ satisfying
\[ \Pr[\hit{B'_e} \neq \emptyset] \geq \Pr[\hit H \neq \emptyset]/((d+1) K(d-e,\gamma)). \]
We can apply the lemma since $\gamma = c/2^{d-1} \leq c_{\max}(d-e)$ and $p \leq 2^{-(d-1)} \kappa(d-e) c = \kappa(d-e) \gamma$.

For each $I \in B'_e$, we choose a set $N_I$ of $\lfloor (\gamma/p)^e \rfloor \geq (\gamma/p)^e/2$ extensions in $H_{e-1}$. We let $H'$ be the union of the sets $N_I$ for all $I \in B'_e$. Each hyperedge can appear in at most $\binom{d}{d-e} \leq 2^{d-1}$ many $N_I$ (since a hyperedge only has so many subsets of size $d-e$), and so $|H'| \geq (\gamma/p)^e|B'_e|/2^d$.

\smallskip

We start by showing that $H'$ has branching factor $c/p$. We have to show that each set $A$ has at most $(c/p)^{d-|A|}$ extensions in $H'$. We distinguish between two cases, depending on the size of $A$, which we can assume is at most $d-1$ (the condition always holds for larger $A$ for trivial reasons).

Note first that by construction, $H_{e-1}$ has $(e-1)$-bounded branching factor $\gamma/p$, which means that every set $A$ of size at least $d-(e-1)$ has at most $(\gamma/p)^{d-|A|}$ extensions in $H_{e-1}$. Since $H' \subseteq H_{e-1}$, this shows that each set $A$ of size more than $d-e$ has at most $(\gamma/p)^{d-|A|} \leq (c/p)^{d-|A|}$ extensions in $H'$ as well.

Suppose next that $|A| \leq d-e$. For a set of hyperedges $X$, let $A(X)$ be the number of hyperedges in $X$ containing $A$. Each hyperedge in $H'$ containing $A$ belongs to some $N_I$, and so
\[
 A(H') \leq \sum_{I \in B'_e}
 A(N_I)  =
 \sum_{J \subseteq A} \sum_{\substack{I \in B'_e \\ I \cap A = J}} A(N_I).
\]
Fix some particular $J \subseteq A$ and consider all $I$ such that $A\cap I=J$. Since $B'_e$ has branching factor $\gamma/p$, it contains at most $(\gamma/p)^{d-e-|J|}$ many hyperedges $I$ extending $J$.

If $J \subsetneq A$ then each hyperedge counted in $A(N_I)$ contains $I \cup A$, a set of size $|I \cup A| = d-e + |A \setminus J| \geq d-e+1$. Since $N_I \subseteq H_{e-1}$ and $H_{e-1}$ has $(e-1)$-bounded branching factor $\gamma/p$, we see that $A(N_I) \leq (\gamma/p)^{d-(d-e + |A \setminus J|)} = (\gamma/p)^{e-|A\setminus J|}$. In total, the contribution of this $J$ to the sum is at most
\[
 (\gamma/p)^{d-e-|J|} \cdot (\gamma/p)^{e - |A \setminus J|} = (\gamma/p)^{d-|A|}.
\]
If $J = A$ then $A(N_I) \leq (\gamma/p)^e = (\gamma/p)^{e-|A \setminus J|}$ by construction, and so we reach the same conclusion.

Since there are $2^{|A|} \leq 2^{d-e} \leq 2^{d-1}$ summands, we conclude that $A(H') \leq 2^{d-1} (\gamma/p)^{d-|A|} \leq (c/p)^{d-|A|}$, completing the proof that $H'$ has branching factor $c/p$.

\smallskip

Having shown that $H'$ has branching factor $c/p$, we can now appeal to \Cref{cor:bf-hit}, which shows that
\begin{multline*}
 \Pr[\hit{H'} \neq \emptyset] \geq (1 - 2^d c) p^d |H'| \geq
 (1/2) p^d (\gamma/p)^e |B'_e| / 2^d \geq
 (1/2) (\gamma^e/2^d) \Pr[\hit{B'_e} \neq \emptyset] \\ \geq
 \frac{\gamma^e/2^{d+1}}{(d+1) K(d-e, \gamma)} \Pr[\hit H \neq \emptyset] \geq
 \frac{\gamma^e}{2^{3d} K(d-e,\gamma)} \Pr[\hit H \neq \emptyset],
\end{multline*}
using $d+1 \leq 2^d$ and $d \geq 1$.
This completes the proof.
\end{proof}

\subsection{Proof in uniform setting} \label{sec:rub-uniform}

The only two places at which the proof in \Cref{sec:rub-product} accesses the measure directly are \Cref{lem:bf-hit} and \Cref{cor:bf-hit}, so we start by generalizing these two results to the uniform setting: $S \sim \nu_n(k)$, and $p = k/n$. To avoid trivialities, we also assume that $k \geq d$.

\begin{lemma} \label{lem:bf-hit-uniform}
If $c \leq 1$ and $H$ is a $d$-uniform hypergraph with branching factor $c/p$ then for all $I \in H$,
\[
 \Pr_{S\in\nu_n(k)}[\hit H(S) = \{I\} \mid \hit H(S) \ni I] \geq 1 - 2^dc.
\]	
\end{lemma}
\begin{proof}
The original proof uses the bound
\[
 \Pr_{S\sim\mu_p}[S \supseteq J \mid S \supseteq I] \leq p^{|J \setminus I|}.
\]
We claim that this bound holds also for the uniform setting. Indeed, there are $\binom{n-|I|}{k-|I|}$ sets containing $I$, and out of those, $\binom{n-|I \cup J|}{k-|I \cup J|}$ also contain $J$. Therefore the probability is exactly
\[
 \frac{\binom{n-|I \cup J|}{k-|I \cup J|}}{\binom{n-|I|}{k-|I|}} =
 \frac{(k-|I \cup J|+1) \cdots (k-|I|)}{(n-|I \cup J|+1) \cdots (n-|I|)} =
 \frac{k-|I \cup J|+1}{n-|I \cup J|+1} \cdots \frac{k-|I|}{n-|I|}.
\]
Since $(k-\ell)/(n-\ell) \leq k/n$, we see that the probability is at most $p^{(n - |I|) - (n - |I\cup J|)} = p^{|J \setminus I|}$, as needed.

Using this bound, the same statement follows with an identical proof.
\end{proof}

Generalizing \Cref{cor:bf-hit} is messier, since we need a \emph{lower bound} on the event ``$S \supseteq I$'', where $|I| = d$.

\begin{corollary} \label{cor:bf-hit-uniform}
If $c \leq 1$ and $H$ is a $d$-uniform hypergraph with branching factor $c/p$ then for all $I \in H$,
\[
 \Omega(e^{-d}) \cdot (1 - 2^dc) p^d |H| \leq \Pr_{S\in\nu_n(k)}[\hit H \neq \emptyset] \leq p^d |H|.
\]	
\end{corollary}
\begin{proof}
The upper bound follows from the union bound, since the probability of hitting any particular set is at most $p^d$, as shown in the proof of \Cref{lem:bf-hit-uniform}.

The original proof of the lower bound uses $\Pr_{S\sim\mu_p}[S \supseteq I] \geq p^d$ for a set $I$ of size $d$. This inequality no longer holds in the uniform case, instead, we have (for the formula, see the proof of \Cref{lem:bf-hit-uniform})
\[
 \Pr_{S\sim\nu_n(k)}[S \supseteq I] =
 \frac{k-d+1}{n-d+1} \cdots \frac{k}{n} \geq \frac{k(k-1)\cdots(k-d+1)}{k^d} p^d.
\]
We can lower bound the factor in front of $p^d$ by
\[
 \left(1 - \frac{1}{k}\right) \cdots \left(1 - \frac{d-1}{k}\right) \geq
 \left(1 - \frac{1}{d}\right) \cdots \left(1 - \frac{d-1}{d}\right) = \frac{d!}{d^d},
\]
which according to Stirling's approximation is $\Omega(e^{-d})$.
\end{proof}

This adds an additional $\Omega(e^d)$ factor to the very last display of the proof of \Cref{lem:pruning-product-detailed}, but only affects some hidden constants in the statement of the \namecref{lem:pruning-product-detailed}.

{\small
\bibliographystyle{prahladhurl}
\bibliography{DFH-agreement-bib}
}

\end{document}